\newtheorem{definition}{Definition}
\newcommand{\be}{\begin{equation}}
\newcommand{\ee}{\end{equation}}
\newcommand{\m}{\mu}
\newcommand{\w}{\tau}
\DeclareMathSymbol{\Lambda}{\mathord}{operators}{"03}
\newtheorem{thm}{Theorem}[section]
\newtheorem{prop}[thm]{Proposition}
\newtheorem{theorem}[thm]{Theorem}
\newtheorem{lemma}[thm]{Lemma}
\newtheorem{remark}[thm]{\it Remark}
\newcommand{\bff}{\boldsymbol}
\begin{document}

\title[Non-Abelian elastic collisions and discrete analytic functions]{Non-Abelian elastic collisions, associated difference systems of equations  and discrete analytic functions}
\author[P. Kassotakis]{Pavlos Kassotakis}
\address{Pavlos Kassotakis, Department of Mathematical Methods in Physics, Faculty of Physics,
University of Warsaw, Pasteura 5, 02-093, Warsaw, Poland}
 \email{Pavlos.Kassotakis@fuw.edu.pl, pavlos1978@gmail.com}

\author[T. Kouloukas]{Theodoros Kouloukas}
\address{Theodoros Kouloukas, School of Computing and Digital Media, London Metropolitan University, 166-220 Holloway Rd, London N7 8DB, U.K.}
\email{t.kouloukas@londonmet.ac.uk}

\author[M. Nieszporski]{Maciej Nieszporski}
\address{Maciej Nieszporski, Department of Mathematical Methods in Physics, Faculty of Physics,
University of Warsaw, Pasteura 5, 02-093, Warsaw, Poland}
 \email{Maciej.Nieszporski@fuw.edu.pl}

\date{\today}
\subjclass[2020]{37K60, 39A14, 37K10, 16T25}

\begin{abstract}
We extend  the equations of motion that describe non-relativistic elastic collision of two particles in one dimension to an arbitrary associative algebra. Relativistic elastic collision equations turn out to be a
particular  case of these generic equations. Furthermore, we show that these equations can be reinterpreted as difference systems defined on the ${\mathbb Z}^2$ graph
and this reinterpretation  relates (unifies) the linear and the non-linear approach of discrete analytic functions.

\end{abstract}

\maketitle

% Uncomment for keywords
\vspace{2pc}
\noindent{\it Keywords}: Elastic collisions, Yang-Baxter maps, non-abelian  difference systems, nonlinear $\sigma$-models, discrete analytic functions

\setcounter{tocdepth}{2}

%\tableofcontents

%%%%%%%%%%%%%%%%%%%%%%%%
%%%%%%%%%%%%%%%%%%%%%%%%
%%%%%%%%%%%%%%%%%%%%%%%%

\section{Introduction}

The physical phenomenon of the non-relativistic elastic
collisions of particles is governed by the equations that give rise to a
map with
the Yang--Baxter property (see Section \ref{section2})  and after a suitable reinterpretation leads
to a system of linear difference equations \cite{Kouloukas-2017,KOULOUKAS:2023}.
In this article, we show that abandoning the assumption of commutativity of variables,
even the simple linear integrable difference system of equations associated with
elastic collisions, yield
nonlinear, non-abelian integrable difference system of equations. From
this non-abelian difference
system we can recover the abelian one as a special case.
Actually, in general we obtain several abelian systems with
different features. In that
respect, the non-abelian system serves as a “top” system since it
includes (unifies) the abelian
ones. In other words, if we consider
two or several systems of equations where the dependent variables are
elements of a field, that is
the multiplication of variables is abelian (commutative), then one can
try to unify these systems in the
following sense:

\begin{center}
\em First, extend one of the systems to the non-abelian setting i.e.  allow all of its  variables to be
elements of an %associative noncommutative 
 algebra $\mathcal{A}$  rather than  a field. Second, show that the original abelian systems
 are special cases of the non-abelian one, which are obtained by specifying  the algebra (or some of its subspaces) in question.
 \end{center}

In this article we present two examples of such a procedure. 
 In the first illustrative example, we unify, in the aforementioned sense, the equations that describe non-relativistic elastic collisions of point-mass particles with a system of equations that includes the equations of relativistic elastic collisions as a subsystem.
 In the second  example   (that can be considered as an extension and a reinterpretation of the first one)  we unify the linear  and the nonlinear approach of discrete analytic functions.
In the classical article on discrete analytic functions \cite{BoMeSu},
the authors show that
Duffin's linear theory of discrete analytic functions \cite{Duffin2}
can be viewed as a linearisation of the
nonlinear theory that
``is based on the ideas by Thurston
\cite{Thurston} and declares circle patterns to be natural discrete analogs of analytic functions".
Here instead we present the two theories as special abelian cases of the non-abelian theory.

There is not a unique procedure to extend an abelian system to its non-abelian counterpart. To resolve this non-uniqueness issue, we perform the non-abelian extension on the level of the underlying Lax formulation known from the theory of integrable systems \cite{Lax}. The procedure of using the Lax pair formulation to lift an abelian system to its non-abelian version has already been considered in  the literature, see e.g. \cite{MIKHAILOV:2024}.  

  The following system of non-abelian equations serves as a  central point of this article
\begin{subequations}\label{TheSystemI}
%\begin{empheq}[left=\empheqlbrace]{align}
\begin{gather}
\label{icentr:2:1}
 v^{i}{'} -v^{j}{'} =v^{j}-v^{i},\\
\label{icentr:2:2}
\m^{i}{'} +\m^{j}{'} =\m^{j}+\m^{i},\\
\label{icentr:2:3}
\m^{i}{'} \m^{j}{'} =\m^{j} \m^{i},\\
\label{icentr:2:4}
\m^{i}{'} v^{j}{'} +v^{i}{'} \m^{j}{'}=\m^{j} v^{i}+v^{j} \m^{i}.
\end{gather}
%&\end{empheq}
\end{subequations}
where $i\neq j\in\{1,\ldots,N\},$ $N\geq 2\in \mathbb{N}$ and we assume
 that both the primed 
variables $\m^{i}{'}$, $v^{i}{'}$ and the non-primed ones $\m^{i}$, $v^{i}$
belong to a unital associative algebra $\mathcal{A}$  over a field $\mathbb{F}$. 

System (\ref{TheSystemI})  is equivalent  to the matrix refactorization problem, 
\begin{align} \label{Lax-fI}
L(v^{i}{'},\m^{i}{'};\lambda)L(v^{j}{'},\m^{j{'}};\lambda)=L(v^{j},\m^{j};\lambda)L(v^{i},\m^{i};\lambda)\;,
\end{align}
where $L$ is the following Lax matrix
\begin{align} \label{Lax_mat}
L(v^{i},\m^{i};\lambda):=\begin{pmatrix}
\m^{i}+\lambda& \lambda v^{i}\\
0&\m^{i}-\lambda
\end{pmatrix},
\end{align} 
and $\lambda$ stands for a spectral parameter,  which  is assumed to belong to the  center of the algebra. The refactorization problem (\ref{Lax-fI}) in components reads

\begin{subequations} \label{cc_ci}
    \begin{align} \label{cc_c1i}
    (\m^i{'}+\lambda)(\m^j{'}+\lambda)=&(\m^j+\lambda)(\m^i+\lambda),\\ \label{cc_c2i}
    (\m^i{'}-\lambda)(\m^j{'}-\lambda)=&(\m^j-\lambda)(\m^i-\lambda),\\ \label{cc_c3i}
    (\m^i{'}+\lambda)v^j{'}+v^i{'} (\m^j{'}-\lambda)=&(\m^j+\lambda)v^i+v^j(\m^i-\lambda).
    \end{align}
    \end{subequations}
Demanding that  (\ref{cc_ci})  hold for every $\lambda,$  (\ref{cc_c1i}) is equivalent to (\ref{cc_c2i}) and together with (\ref{cc_c3i}) we obtain exactly
the system of non-abelian equations (\ref{TheSystemI}). 
The sub-system, (\ref{icentr:2:2}),(\ref{icentr:2:3}) could be re-interpreted as  a difference system in edge variables associated  with  {\em the discrete nonlinear chiral field  (or  $\sigma$-model) equation} \cite{Cherednik_1982} (see Section \ref{re0}). Various reductions of (\ref{icentr:2:2}),(\ref{icentr:2:3}) have been considered in  the literature. In detail, in \cite{Kouloukas_2009,Kouloukas_2011} where the algebra $\mathcal{A}$ was considered as the algebra of the $n\times n$ matrices over $\mathbb{C}$, the solution of (\ref{icentr:2:2}),(\ref{icentr:2:3}) with respect to $\m^{i}{'},\m^{j}{'}$ has been explicitly  presented as a multi-component Yang-Baxter map. 
The  Lax matrix (\ref{Lax_mat}) with entries in a commutative ring and in connection with the linear theory of discrete analytic functions can be found  in \cite{Bobenko:2008}.

The article is organized as follows.

In Section \ref{sec-laws}, we consider  two  choices of the underlying  algebra $\mathcal{A}$. 
The first choice, is  to take the field of real numbers $\mathbb{R}$ as the algebra $\mathcal{A}.$ This choice
 leads to the equations that describe the elastic collisions between the $i$-th and the $j$-th non-relativistic particles, where $m^i:=\frac{1}{\mu^i}$, $m^j:=\frac{1}{\mu^j}$ and $v^i$, $v^j$ respectively denote the
masses and the velocities of the  particles before collision, while the primed variables denote
these quantities after collision.
The second choice is to consider all $\mu$'s as off-diagonal $2 \times 2$ matrices over the field
$\mathbb{R}$ and all $v$'s as diagonal $2 \times 2$ matrices over the same field. This choice results in the equations that describe the elastic collisions of relativistic particles. 

In Section \ref{sec-maps}, first we  assume that $\mathcal{A}$ is  a division ring and then we show that the system (\ref{TheSystemI}) can be solved for four  (out of eight variables) and give rise  
to maps $\mathcal{A}^2 \times \mathcal{A}^2 \to \mathcal{A}^2 \times \mathcal{A}^2.$ 
 We  prove that  these maps are multidimensional compatible, while their {\em companion} maps satisfy the Yang-Baxter property.

 In  Section \ref{re0},  we reinterpret    (\ref{TheSystemI}) as a system of difference equations defined on edges of  the $\mathbb{Z}^N$ graph
and furthermore we associate  difference systems of  equations defined on vertices of the same graph. Specifically, the system of vertex equations on the functions $\chi , \, \psi ,\, \omega , \, \sigma , \, \phi  \, : {\mathbb Z}^N \to \mathcal{ A}$, where  $\mathcal{ A}$  a division ring, reads
%\begin{subequations}\label{idolonii}
\begin{gather}\label{ido_i}
\begin{gathered}
\psi_i-\psi=\phi_i\phi^{-1},\\
\sigma_i-\sigma=\phi^{-1}_i\phi,\\
\chi_i+\chi=\phi_i(\omega_i-\omega)\phi^{-1},
\end{gathered}
\end{gather}
%\end{subequations}
where $i\in\{1,\ldots, N\}$ and the subscripts denote forward shifts  in the $i$-th direction (see Figure \ref{fig1} and Section \ref{re} for details). 
In particular, having eliminated $\psi,\omega,$ and $\sigma$  from  (\ref{ido_i}), we obtain  {\em the discrete nonlinear chiral field  (or  $\sigma$-model) equation} \cite{Cherednik_1982} for the function $\phi$ that reads
\begin{align} \label{vvv01}
\begin{aligned}
\phi_{ij}\left(\phi_j^{-1}-\phi_i^{-1}\right)=\left(\phi_i-\phi_j\right)\phi^{-1},
\end{aligned}
\end{align}
 which is coupled to the following linear equation for the function $\chi$
\begin{align} \label{vvv02}
\begin{aligned}
\phi_{ij}\phi_j^{-1}(\chi_j+\chi)+(\chi_{ij}+\chi_j)\phi_j\phi^{-1}=\phi_{ij}\phi_i^{-1}(\chi_i+\chi)+(\chi_{ij}+\chi_i)\phi_i\phi^{-1}.
\end{aligned}
\end{align}

In Section \ref{sec4n}, we show that the solutions of the  difference systems of  equations defined on vertices of quads  on the $\mathbb{Z}^N$ graph, satisfy systems of three-dimensional vertex equations. For example, we show that the solutions of (\ref{vvv01}) and (\ref{vvv02}) satisfy the following three-dimensional system
\begin{equation} \label{a3deq}
\begin{array}{c}
 \phi_{ij}\phi_{ik}^{-1}\phi_{jk}=\phi_{jk}\phi_{ik}^{-1}\phi_{ij},\\
    \left(\phi_{ik}^{-1}\left(\chi_{ik}+\chi_k\right)-\phi_{jk}^{-1}\left(\chi_{jk}+\chi_k\right)\right)\phi_k+
    \left(\phi_{ij}^{-1}\left(\chi_{ij}+\chi_i\right)-\phi_{ik}^{-1}\left(\chi_{ik}+\chi_i\right)\right)\phi_i\\
    +\left(\phi_{jk}^{-1}\left(\chi_{jk}+\chi_j\right)-\phi_{ij}^{-1}\left(\chi_{ij}+\chi_j\right)\right)\phi_j=0,
    \end{array}
\end{equation}     
where $i\neq j\neq k \in \{1,\ldots, N\}.$

In the final Section \ref{secf}, we show how the linear  and the nonlinear  approach of discrete analytic functions are related via system (\ref{ido_i}).  When $\mathcal{ A}=\mathbb{C}$, system (\ref{ido_i})  can be treated as the basic equations  of the theory of discrete analytic functions \cite{Duffin2}. While, when as $\mathcal{ A}$ we choose the subspace of $2\times 2$ off-diagonal matrices over a commutative ring,
 system (\ref{ido_i}) can be reinterpreted as the  nonlinear approach to discrete analytic functions \cite{Thurston}.
  Therefore, we relate
 (unify) in the sense mentioned in the beginning of this introduction, the linear and the nonlinear approach of discrete analytic functions.

\section{ Reductions to classical and relativistic elastic collisions of point-mass particles} \label{sec-laws}

In this Section we show that in the abelian case, system  (\ref{TheSystemI})  describes the elastic collisions of non-relativistic particles. Furthermore we show that the relativistic elastic collisions, with an additional degree of freedom, are obtained by a suitable choice of the underlying  algebra $\mathcal{A}$.
\subsection{Classical head-on elastic collisions of  point-mass particles}
Let the point-mass particles labeled by the index $i\in\mathbb{N},$ with  initial respective masses and velocities $m^{i}$ and $v^{i},$  undergo pairwise one-dimensional elastic collisions. The conservation of kinetic energy and the conservation of momentum  in the event of collision of two particles respectively read
\begin{align} \label{E+M}
\begin{aligned}
m^{i}\left(v^{i}\right)^2+m^{j}\left(v^{j}\right)^2=m^{i}{'}\left(v^{i}{'}\right)^2+m^{j}{'}\left(v^{j}{'}\right)^2,\\
m^{i}v^{i}+m^{j}v^{j}=m^{i}{'}v^{i}{'}+m^{j}{'}v^{j}{'},
\end{aligned} && i\neq j \in\mathbb{N},
\end{align}
where $m^{i}{'},m^{j}{'}$ and $v^{i}{'},v^{j}{'},$ denote the respective masses and velocities after the collision.
Demanding that momentum is  preserved  in any  Galilean frame leads to the conservation of mass 
\begin{equation} \label{conserv_sum_mass}
m^i +m^j  =m^i{'} +m^j{'}.
\end{equation}
In the standard   elastic collision process,  it is assumed that there is no transfer of mass between the particles, i.e. we have
\begin{equation} \label{conserv_mass}
m^i{'}=m^i, \qquad m^j{'}=m^j. \end{equation}
 Then (\ref{E+M}) equivalently reads
\begin{align*}
m^{i} \left(v^{i}+v^{i}{'}\right)\left(v^{i}-v^{i}{'}\right)+m^{j} \left(v^{j}+v^{j}{'}\right)\left(v^{j}-v^{j}{'}\right)=0,\\
m^{i} \left(v^{i}-v^{i}{'}\right)+m^{j} \left(v^{j}-v^{j}{'}\right)=0,
\end{align*}
or when $m^{i} \left(v^{i}-v^{i}{'}\right)\neq 0,$
\begin{align} \label{2:1}
\begin{aligned}
v^{i}+v^{i}{'}=v^{j}+v^{j}{'},\\ %\label{2:2}
m^{i} \left(v^{i}-v^{i}{'}\right)+m^{j} \left(v^{j}-v^{j}{'}\right)=0.
\end{aligned}
\end{align}

%%%%%%%%%%%%%%%%%%%%%%%%%%%%%%%%%%%%%%%%%%%%%%%%%%%%%%%%%%%%%%%%%%%%%%%%%%%%%%%%%
In the following Proposition we show how to obtain the equations that describe classical elastic collisions of point-mass particles from an appropriate reduction of system (\ref{TheSystemI}).

\begin{prop} 
When we consider the underlying algebra $\mathcal{A}$ to be abelian,  e.g. $\mathcal{A}=\mathbb{R}$, the system of equations (\ref{TheSystemI}), for $\mu^i{'}=\mu^i=(m^i)^{-1}$ and $\mu^j{'}=\mu^j=(m^j)^{-1},$ reduces to
%\begin{subequations}\label{TheSystem1_22}
%\begin{empheq}[left=\empheqlbrace]{align}
\begin{align}\label{TheSystem1_22}
m^{i}{'} =& m^{j}, &
m^{j}{'} =& m^{i}, &
 v^{i}{'} -v^{j}{'} =&v^{j}-v^{i},&
m^{i} (v^{j}{'} - v^{j}) =&m^{j} (v^{i}-v^{i}{'}),
\end{align}
%&\end{empheq}
%\end{subequations}
or to 
%\begin{subequations}\label{TheSystem1}
%\begin{empheq}[left=\empheqlbrace]{align}
\begin{align} \label{TheSystem1}
m^{i}{'} =& m^{i},&
m^{j}{'} =& m^{j}, &
 v^{i}{'} -v^{j}{'} =&v^{j}-v^{i},&
m^{i} (v^{j}{'} - v^{j}) =&m^{j} (v^{i}-v^{i}{'}).
\end{align}
%&\end{empheq}
%\end{subequations}
The latter  coincides with (\ref{conserv_mass}),(\ref{2:1}), that is the equations of classical elastic collisions of two point-mass particles with masses $m^i,m^j>0,$ with velocities $v^i,v^j$ before the collision and velocities $v^i{'},v^j{'}$ after the collision.
\end{prop}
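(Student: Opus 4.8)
The plan is to impose commutativity from the outset and dispatch the four equations of (\ref{TheSystemI}) one at a time, carrying the parametrization $m^i=(\mu^i)^{-1}$ through to the end. Since $\mathcal{A}=\mathbb{R}$, every product commutes, so (\ref{icentr:2:1}) is untouched and already supplies the velocity relation $v^i{'}-v^j{'}=v^j-v^i$, which is the third equation of both (\ref{TheSystem1_22}) and (\ref{TheSystem1}). The substitution $\mu\mapsto m^{-1}$ is legitimate because a genuine collision has positive masses, so all $\mu$'s are nonzero.

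The heart of the argument is the pair (\ref{icentr:2:2}) and (\ref{icentr:2:3}). In the commutative case these say precisely that the two elementary symmetric functions are preserved, $\mu^i{'}+\mu^j{'}=\mu^i+\mu^j$ and $\mu^i{'}\mu^j{'}=\mu^i\mu^j$. Hence $\mu^i{'}$ and $\mu^j{'}$ are roots of the same monic quadratic $t^2-(\mu^i+\mu^j)t+\mu^i\mu^j$ whose roots are $\mu^i,\mu^j$; as a quadratic over a field has at most two roots, the unordered pairs must agree, $\{\mu^i{'},\mu^j{'}\}=\{\mu^i,\mu^j\}$. This produces exactly two branches: the identity branch $(\mu^i{'},\mu^j{'})=(\mu^i,\mu^j)$ and the transposition branch $(\mu^i{'},\mu^j{'})=(\mu^j,\mu^i)$. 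Translating through $m=\mu^{-1}$, the identity branch gives $m^i{'}=m^i,\ m^j{'}=m^j$, i.e. the mass conservation (\ref{conserv_mass}) and the first two equations of (\ref{TheSystem1}); the transposition branch gives $m^i{'}=m^j,\ m^j{'}=m^i$, the first two equations of (\ref{TheSystem1_22}).

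For the last equation (\ref{icentr:2:4}), commutativity turns it into $\mu^i{'}v^j{'}+\mu^j{'}v^i{'}=\mu^i v^j+\mu^j v^i$. I would substitute each branch and clear the denominators arising from $\mu=m^{-1}$. In the identity branch it collapses to $\mu^i(v^j{'}-v^j)=\mu^j(v^i-v^i{'})$, which upon clearing denominators is the momentum-conservation law $m^i(v^i-v^i{'})+m^j(v^j-v^j{'})=0$; together with the velocity relation from (\ref{icentr:2:1}) this is exactly (\ref{2:1}), and combined with (\ref{conserv_mass}) it is the full classical elastic-collision system. The transposition branch is handled the same way and yields the remaining equation of (\ref{TheSystem1_22}).

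The main obstacle is the symmetric-function step of the second paragraph: showing that (\ref{icentr:2:2})--(\ref{icentr:2:3}) force the unordered pair of $\mu$'s to be preserved, with only those two branches possible. This is exactly where commutativity is indispensable, since the ``degree-two polynomial has at most two roots'' argument fails over a noncommutative $\mathcal{A}$, which is precisely why the full non-abelian system is richer. A secondary point demanding care is the index bookkeeping when clearing the $\mu=m^{-1}$ denominators in (\ref{icentr:2:4}): I would cross-check the resulting fourth equation against the physical law $m^i(v^i-v^i{'})+m^j(v^j-v^j{'})=0$ in (\ref{2:1}) to confirm the masses sit on the intended sides, as the naive substitution places the index $j$ on the factor multiplying the $i$-th velocity increment and conversely.
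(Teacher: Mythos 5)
Your proposal is correct and follows essentially the same route as the paper's proof, which simply asserts that in the abelian case (\ref{icentr:2:2})--(\ref{icentr:2:3}) admit only the transposition branch $\mu^{i}{'}=\mu^{j},\ \mu^{j}{'}=\mu^{i}$ or the identity branch $\mu^{i}{'}=\mu^{i},\ \mu^{j}{'}=\mu^{j}$, and then reads off the reduced system in the identity branch; your elementary-symmetric-function/quadratic-roots argument is exactly the justification the paper leaves implicit. One remark on your final paragraph: the momentum relation you derive, $m^{i}(v^{i}-v^{i}{'})+m^{j}(v^{j}-v^{j}{'})=0$, is indeed the one that coincides with (\ref{2:1}), whereas the fourth equations as printed in (\ref{TheSystem1_22}) and (\ref{TheSystem1}) carry $m^{i}$ and $m^{j}$ on the opposite factors --- they are the $\mu$-form of the relation with $\mu^{i},\mu^{j}$ renamed to $m^{i},m^{j}$ without inverting, i.e.\ a typo in the paper, since as printed (\ref{TheSystem1}) does not coincide with (\ref{2:1}). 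Your closing cross-check against the physical law settles this index bookkeeping in the correct direction, so your derivation proves the Proposition's intended content.
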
 

\begin{proof} 
For $\mathcal{A}=\mathbb{R},$ equations
(\ref{icentr:2:2}) and (\ref{icentr:2:3}) yield $\mu^i{'}=\mu^j$ and $\mu^j{'}=\mu^i$ or
$\mu^i{'}=\mu^i=:(m^i)^{-1}$ and $\mu^j{'}=\mu^j=:(m^j)^{-1}.$ 
In the latter case  system (\ref{TheSystemI}) becomes exactly  (\ref{TheSystem1})
that coincides with (\ref{conserv_mass}) and (\ref{2:1}).
\end{proof}

\subsection{Head-on elastic collisions of relativistic point-mass particles}

Let the relativistic point-mass particles labeled by the index $i\in\mathbb{N},$ with  initial respective  rest masses $m_0^i$ and $m_0^j$ and respective velocities $u^i$ and $u^j$ before the collision and velocities $u^i{'},u^j{'}$ after the collision.
The conservation of relativistic momentum and energy under the elastic collision respectively read
\begin{align} \label{el_col_0}
\begin{aligned}
\frac{m_{0}^{i}u^{i}{'}}{\sqrt{1-\left(\frac{u^{i}{'}}{c}\right)^2}}+\frac{m_{0}^{j}u^{j}{'}}{\sqrt{1-\left(\frac{u^{j}{'}}{c}\right)^2}}&=
\frac{m_{0}^{i}u^{i}}{\sqrt{1-\left(\frac{u^{i}}{c}\right)^2}}+\frac{m_{0}^{j}u^{j}}{\sqrt{1-\left(\frac{u^{j}}{c}\right)^2}},\\
\frac{m_{0}^{i}}{\sqrt{1-\left(\frac{u^{i}{'}}{c}\right)^2}}+\frac{m_{0}^{j}}{\sqrt{1-\left(\frac{u^{j}{'}}{c}\right)^2}}&=
\frac{m_{0}^{i}}{\sqrt{1-\left(\frac{u^{i}}{c}\right)^2}}+\frac{m_{0}^{j}}{\sqrt{1-\left(\frac{u^{j}}{c}\right)^2}},
\end{aligned}
\end{align}
where $c$ the speed of light. Introducing the change of variables 
 \begin{align*}
 \begin{aligned}
 \mathbb{R}^{+}\ni x^i\mapsto u^i=& c\frac{\left(x^i\right)^2-1}{\left(x^i\right)^2+1}\in (-c,c),\\
 \mathbb{R}^{+}\ni x^i{'}\mapsto u^i{'}=& c\frac{\left(x^i{'}\right)^2-1}{\left(x^i\right)^2+1}\in (-c,c),
   \end{aligned}&& i=1,\ldots, N,
   \end{align*}
    that is a bijection,  equations (\ref{el_col_0}) become
\begin{align*}
m_{0}^{i}\left(x^i{'}-\frac{1}{x^i{'}}\right)+m_{0}^{j}\left(x^j{'}-\frac{1}{x^j{'}}\right)&=
m_{0}^{i}\left(x^i-\frac{1}{x^i}\right)+m_{0}^{j}\left(x^j-\frac{1}{x^j}\right),\\
m_{0}^{i}\left(x^i{'}+\frac{1}{x^i{'}}\right)+m_{0}^{j}\left(x^j{'}+\frac{1}{x^j{'}}\right)&=
m_{0}^{i}\left(x^i+\frac{1}{x^i}\right)+m_{0}^{j}\left(x^j+\frac{1}{x^j}\right).
\end{align*}
By adding and subtracting the equations above we obtain the conservation laws
\begin{align}\label{re-m}
m_{0}^{i}\left(x^i{'}-x^i\right)&=m_{0}^{j}\left(x^j-x^j{'}\right),\\ \label{re-e}
m_{0}^{i}\left(\frac{1}{x^i{'}}-\frac{1}{x^i}\right)&=
m_{0}^{j}\left(\frac{1}{x^j}-\frac{1}{x^j{'}}\right),
\end{align}
which  are equivalent to the  conservation of relativistic momentum and energy.

 It turns out that the  system of equations (\ref{TheSystemI}) under a suitable choice of the underlying  algebra $\mathcal{A}$  includes the relativistic elastic collisions, henceforth it relates the Newtonian and the relativistic approaches.
\begin{lemma}\label{lema000}
Let $\mathcal{A}=\mathcal{A}_o\bigoplus \mathcal{A}_e$  be a $\mathbb{Z}_2-$graded  algebra  over $\mathbb{C}$. Let  the odd subspace $\mathcal{A}_o$ of the algebra $\mathcal{A}$ be spanned by the $2\times 2$ off-diagonal matrices
\begin{align*}
 \m^i=\begin{pmatrix}
0&x^i\\
X^i&0
\end{pmatrix},\qquad \m^i{'}=\begin{pmatrix}
0&x^i{'}\\
X^i{'}&0
\end{pmatrix},
\end{align*}
and the even subspace $\mathcal{A}_e$ be spanned by the $2\times 2$ diagonal matrices
\begin{align*}v^i=\begin{pmatrix}
y^i&0\\
0&Y^i
\end{pmatrix},\qquad v^i{'}=\begin{pmatrix}
y^i{'}&0\\
0&Y^i{'}
\end{pmatrix}.
\end{align*} 
Then the system of equations (\ref{icentr:2:1})-(\ref{icentr:2:4}), restricted on these subspaces respectively reads
\begin{subequations}\label{red_0_2}
\begin{align}\label{red_01}
  y^i{'}-y^j{'}=& y^j-y^i, & Y^i{'}-Y^j{'}=& Y^j-Y^i, \\ \label{red_02}
  x^i{'}+x^j{'}=& x^i+x^j, & X^i{'}+X^j{'}=& X^i+X^j, \\ \label{red_03}
  x^i{'}X^j{'} =& X^ix^j, & X^i{'}x^j{'} =& x^iX^j, \\ \label{red_04}
  x^j{'}y^i{'}+x^i{'}Y^j{'} &=x^iy^j+x^jY^i,&
  X^j{'}Y^i{'}+X^i{'}y^j{'} &=X^iY^j+X^jy^i,
\end{align}
\end{subequations}
and it satisfies the following invariant relations
\begin{align}\label{inv_re_0}
x^i{'}X^i{'}=&x^iX^i,& x^j{'}X^j{'}=&x^jX^j,& y^i{'}-Y^i{'}=&Y^i-y^i,& y^j{'}-Y^j{'}=&Y^j-y^j.
\end{align}
\end{lemma}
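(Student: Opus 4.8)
The plan is to substitute the matrix forms of $\mu^i,\mu^i{'}$ (off-diagonal) and $v^i,v^j{'}$ (diagonal) directly into each of the four equations \eqref{icentr:2:1}--\eqref{icentr:2:4} and read off the resulting scalar relations by matching matrix entries. Since diagonal matrices add entrywise, equation \eqref{icentr:2:1}, $v^i{'}-v^j{'}=v^j-v^i$, splits immediately into its $(1,1)$ and $(2,2)$ entries, giving exactly \eqref{red_01}. The additive equation \eqref{icentr:2:2} on off-diagonal matrices likewise splits into its $(1,2)$ and $(2,1)$ entries to yield \eqref{red_02}.

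For the multiplicative relation \eqref{icentr:2:3}, $\mu^i{'}\mu^j{'}=\mu^j\mu^i$, I would compute the product of two off-diagonal $2\times 2$ matrices, which is diagonal: the $(1,1)$ entry of $\mu^i{'}\mu^j{'}$ is $x^i{'}X^j{'}$ and the $(2,2)$ entry is $X^i{'}x^j{'}$, while the right-hand side gives $x^jX^i$ and $X^jx^i$ in the corresponding slots. Matching entries produces \eqref{red_03}. The most involved case is \eqref{icentr:2:4}, $\mu^i{'}v^j{'}+v^i{'}\mu^j{'}=\mu^j v^i+v^j\mu^i$: here off-diagonal times diagonal is again off-diagonal, so I would carefully track that $\mu^i{'}v^j{'}$ contributes $x^i{'}Y^j{'}$ to the $(1,2)$ slot and $X^i{'}y^j{'}$ to the $(2,1)$ slot, while $v^i{'}\mu^j{'}$ contributes $y^i{'}x^j{'}$ and $Y^i{'}X^j{'}$ respectively; collecting the $(1,2)$ and $(2,1)$ entries on both sides yields the two relations in \eqref{red_04}.

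Finally, for the invariant relations \eqref{inv_re_0}, the plan is to derive them as algebraic consequences of \eqref{red_0_2} rather than from the matrix structure alone. The products $x^iX^i$ should follow by combining the two halves of \eqref{red_03} with \eqref{red_02}: from $x^i{'}X^j{'}=X^ix^j$ and $X^i{'}x^j{'}=x^iX^j$ together with the conservation of the sums $x^i{'}+x^j{'}$ and $X^i{'}+X^j{'}$, one can eliminate the $j$-variables to show $x^i{'}X^i{'}=x^iX^i$ (and symmetrically for $j$). The relations $y^i{'}-Y^i{'}=Y^i-y^i$ should come out of comparing \eqref{red_01} across the two diagonal components, using that the combination $y+Y$ behaves like a conserved trace-type quantity under the exchange encoded in \eqref{red_01}.

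I expect the main obstacle to be bookkeeping rather than conceptual: the care needed is in correctly tracking which scalar lands in the $(1,2)$ versus the $(2,1)$ entry under products of off-diagonal and diagonal matrices, since the non-commutativity of matrix multiplication means the order of factors genuinely matters and an entry can easily be swapped. The derivation of \eqref{inv_re_0} from \eqref{red_0_2} is the only step requiring a small algebraic argument beyond entry-matching, and I would verify there that the elimination is consistent (for instance that both halves of \eqref{red_03} are needed and that no spurious degeneracy arises when some variable vanishes); this is where I would double-check the computation most carefully.
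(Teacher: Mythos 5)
Your first part---deriving \eqref{red_0_2} by substituting the matrix forms and matching the $(1,2)$, $(2,1)$ and diagonal entries---is correct, and it is exactly what the paper does (the paper compresses it into one sentence; your slot bookkeeping for \eqref{red_03} and \eqref{red_04} is accurate).

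The second part has a genuine gap: the invariant relations \eqref{inv_re_0} are \emph{not} algebraic consequences of the system \eqref{red_0_2}, so no elimination argument of the kind you propose can succeed. The reason is that \eqref{red_0_2}, viewed as equations for the primed variables, has two solution branches: the exchange solution $x^i{'}=x^j,\ X^i{'}=X^j,\ x^j{'}=x^i,\ X^j{'}=X^i,\ y^i{'}=y^j,\ Y^i{'}=Y^j,\ y^j{'}=y^i,\ Y^j{'}=Y^i$, and the nontrivial branch \eqref{red_2_2}. The exchange solution satisfies all of \eqref{red_01}--\eqref{red_04}, yet violates \eqref{inv_re_0} generically: with $x^i=X^i=1$, $x^j=X^j=2$ it gives $x^i{'}X^i{'}=4\neq 1=x^iX^i$, and likewise $y^i{'}-Y^i{'}=y^j-Y^j\neq Y^i-y^i$ in general. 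Concretely, carrying out your elimination from \eqref{red_02},\eqref{red_03} only yields
\begin{equation*}
x^i{'}X^i{'}=x^i{'}\left(X^i+X^j\right)-X^ix^j,
\end{equation*}
which equals $x^iX^i$ precisely when $x^i{'}=X^i\frac{x^i+x^j}{X^i+X^j}$, i.e.\ precisely on the nontrivial branch; the equations themselves do not force this. (Your suggestion that $y^i{'}-Y^i{'}=Y^i-y^i$ should follow from \eqref{red_01} alone is also untenable, since \eqref{red_01} never couples $y$ to $Y$; that coupling enters only through \eqref{red_04}.) The issues you flagged---degenerate vanishing variables, needing both halves of \eqref{red_03}---are not the real obstruction; the exchange branch is a perfectly nondegenerate solution on which the claimed invariants fail.

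The paper avoids this by solving \eqref{red_0_2} explicitly for the primed variables, exhibiting both branches, and then verifying \eqref{inv_re_0} by direct substitution of the nontrivial solution \eqref{red_2_2}. To repair your argument you must do the same, or otherwise explicitly discard the exchange branch before asserting the invariants.
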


\begin{proof}
By  restricting (\ref{TheSystemI}) on the odd and the even subspaces of the $\mathbb{Z}_2-$graded algebra $\mathcal{A}$, we obtain exactly (\ref{red_0_2}).

The system of equations (\ref{red_0_2}) admits two solutions with respect to the primed variables which they read
\begin{align*}
  x^i{'}= & x^j,& X^i{'}= & X^j,& x^j{'}= & x^i,& X^j{'}= & X^i,\\
  y^i{'}= & y^j,& Y^i{'}= & Y^j,& y^j{'}= & y^i,& Y^j{'}= & Y^i,
\end{align*}
or
\begin{subequations}\label{red_2_2}
\begin{align}\label{red_1}
x^i{'}=&X^i\frac{x^i+x^j}{X^i+X^j},&X^i{'}=&x^i\frac{X^i+X^j}{x^i+x^j},\\ \label{red_2}
x^j{'}=&X^j\frac{x^i+x^j}{X^i+X^j},&X^j{'}=&x^j\frac{X^i+X^j}{x^i+x^j},\\ \label{red_3}
y^i{'}=&X^i\frac{Y^j-y^i}{X^i+X^j}+\frac{x^jY^i+x^iy^j}{x^i+x^j},&Y^i{'}=&x^i\frac{y^j-Y^i}{x^i+x^j}+\frac{X^jy^i+X^iY^j}{X^i+X^j},\\ \label{red_4}
y^j{'}=&x^j\frac{Y^i-y^j}{x^i+x^j}+\frac{X^jy^i+X^iY^j}{X^i+X^j},&Y^j{'}=&X^j\frac{y^i-Y^j}{X^i+X^j}+\frac{x^jY^i+x^iy^j}{x^i+x^j}.
\end{align}
\end{subequations}
By substituting (\ref{red_2_2}) into (\ref{inv_re_0}), we prove that the invariant relations hold. 
\end{proof}

\begin{prop} \label{prop_rel}
The system of equations (\ref{red_02}),(\ref{red_03}) is equivalent to (\ref{re-m}),(\ref{re-e}), hence it describes elastic collisions of relativistic pairs of particles.
\end{prop}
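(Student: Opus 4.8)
The plan is to exhibit an explicit change of variables that turns the Lemma's edge variables $x^i,X^i$ into a relativistic velocity variable together with a conserved rest mass, and then to verify that under this substitution (\ref{red_02}) becomes precisely the momentum and energy conservation laws (\ref{re-m}),(\ref{re-e}), while (\ref{red_03}) becomes a relation that is an automatic consequence of them.

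First I would invoke the invariant relations of Lemma \ref{lema000}: since $x^iX^i=x^i{'}X^i{'}$ and $x^jX^j=x^j{'}X^j{'}$, the quantity $m_0^i:=\sqrt{x^iX^i}$ is unchanged by the collision and can be read as a conserved rest mass. Accordingly I introduce $\xi^i:=\sqrt{x^i/X^i}$, so that $x^i=m_0^i\xi^i$ and $X^i=m_0^i/\xi^i$ (and likewise for the primed quantities, with the \emph{same} $m_0^i$ because the mass is invariant). The variable $\xi^i$ is to be identified with the relativistic variable $x^i$ of (\ref{re-m}),(\ref{re-e}), for which the momentum and energy are proportional to $m_0^i(\xi^i-1/\xi^i)$ and $m_0^i(\xi^i+1/\xi^i)$, as computed in the change of variables above.

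Next I would substitute into (\ref{red_02}): the equation $x^i{'}+x^j{'}=x^i+x^j$ becomes $m_0^i\xi^i{'}+m_0^j\xi^j{'}=m_0^i\xi^i+m_0^j\xi^j$, i.e.\ exactly (\ref{re-m}), and $X^i{'}+X^j{'}=X^i+X^j$ becomes $m_0^i/\xi^i{'}+m_0^j/\xi^j{'}=m_0^i/\xi^i+m_0^j/\xi^j$, i.e.\ exactly (\ref{re-e}); this already establishes one implication. For the converse and to account for (\ref{red_03}), I substitute into $x^i{'}X^j{'}=X^ix^j$ and $X^i{'}x^j{'}=x^iX^j$: after cancelling the common factor $m_0^im_0^j$, both collapse to the single relation $\xi^i\xi^i{'}=\xi^j\xi^j{'}$. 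It then remains to show this relation is forced by (\ref{re-m}),(\ref{re-e}), which follows by dividing (\ref{re-m}) by (\ref{re-e}) after rewriting $1/\xi^i{'}-1/\xi^i=-(\xi^i{'}-\xi^i)/(\xi^i\xi^i{'})$, the common factors $m_0^i(\xi^i{'}-\xi^i)$ and $m_0^j(\xi^j-\xi^j{'})$ cancelling.

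The main obstacle is not the algebra but pinning down the correct nonlinear substitution: one must recognise that $x^i$ and $X^i$ behave like the two light-cone components $m_0^i\xi^i$ and $m_0^i/\xi^i$ (energy-plus-momentum and energy-minus-momentum) rather than like the relativistic velocity itself, which is why the unweighted sums in (\ref{red_02}) reproduce the mass-weighted conservation laws. A secondary point to handle carefully is the branch and non-degeneracy bookkeeping: the substitution requires $m_0^i$ to be conserved, i.e.\ the nontrivial branch of (\ref{red_0_2}) rather than the mass-exchanging swap branch, and the division step that yields $\xi^i\xi^i{'}=\xi^j\xi^j{'}$ presupposes $\xi^i{'}\neq\xi^i$, that is, a genuine collision.
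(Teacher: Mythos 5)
Your proof is correct and takes essentially the same route as the paper: both exploit the conserved invariant $x^iX^i$ to define the rest masses, and your substitution $x^i=m_0^i\xi^i$, $X^i=m_0^i/\xi^i$ is precisely the paper's elimination $X^i=\alpha^i/x^i$ composed with its rescaling $x^i\mapsto\sqrt{\alpha^i}\,x^i$, $\alpha^i=(m_0^i)^2$, after which both arguments land on the same three relations and note that the third is redundant. The only difference is cosmetic but in your favor: you derive the product relation $\xi^i\xi^i{}'=\xi^j\xi^j{}'$ from (\ref{re-m}),(\ref{re-e}) --- the direction actually needed for the converse implication --- whereas the paper records the dependency the other way (obtaining (\ref{Re_2}) from (\ref{Re_1}) and (\ref{Re_3})), and your explicit bookkeeping of the non-swap branch and the non-degeneracy assumption makes precise points the paper leaves implicit.
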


\begin{proof}
From Lemma \ref{lema000} we have that  (\ref{red_02}),(\ref{red_03}) respect the invariant conditions
\begin{align*}
x^i{'}X^i{'}=&x^iX^i:=\alpha^i,& x^j{'}X^j{'}=&x^jX^j:=\alpha^j.
\end{align*}
Under the change of variables $(x^i,X^i,x^j,X^j)\mapsto (x^i,\alpha^i,x^j,\alpha^j),$ the invariant conditions read
\begin{align*}
\alpha^i{'}=&\alpha^i,&\alpha^j{'}=&\alpha^j,
\end{align*}
while
(\ref{red_02}),(\ref{red_03}) become
\begin{align}
\begin{aligned}\label{pofi}
x^i{'}-x^i&=x^j-x^j{'},\\ 
\alpha^i\left(\frac{1}{x^i{'}}-\frac{1}{x^i}\right)&=\alpha^j\left(\frac{1}{x^j}-\frac{1}{x^j{'}}\right),\\ 
 \alpha^jx^i{'} x^i&=\alpha^ix^j x^j{'}.
 \end{aligned}
\end{align}
Under the re-scaling $(x^i,x^j)\mapsto \left(\sqrt{\alpha^i}x^i,\sqrt{\alpha^j}x^j\right)$  followed by the re-parametrization $\alpha^i=(m^i_0)^2,$ $\alpha^j=(m^j_0)^2,$ where $m^i_0,m^j_0$ represent the rest masses of the particles,  
(\ref{pofi}) reads
\begin{align}\label{Re_1}
m^i_0(x^i{'}-x^i)&=m^j_0(x^j-x^j{'}),\\ \label{Re_2}
m^i_0\left(\frac{1}{x^i{'}}-\frac{1}{x^i}\right)&=m^j_0\left(\frac{1}{x^j}-\frac{1}{x^j{'}}\right),\\ \label{Re_3}
 x^i{'} x^i&=x^j x^j{'}.
\end{align}
Note that from the three equations above, as expected,  only two of them  are functionally independent. Indeed, by multiplying (\ref{Re_3}) with (\ref{Re_1}) we obtain (\ref{Re_2}). It is clear that 
 equations (\ref{Re_1}),(\ref{Re_2}) coincide with (\ref{re-m}),(\ref{re-e}) and that completes the proof.

\end{proof}

\begin{remark}
In Proposition \ref{prop_rel} we proved that the sub-system of equations (\ref{red_02}),(\ref{red_03}) is equivalent to the description of elastic collisions of relativistic pairs of particles. At this point it is not clear to us if there is an underlying physical phenomenon that the whole system of equations (\ref{red_01})-(\ref{red_04}) describes. We would like here to mention that using (\ref{inv_re_0}) we re-write the equations (\ref{red_01})-(\ref{red_04}) (or (\ref{red_2_2}) in the new variables 
\begin{align*}
(x^i,X^i,x^j,X^j,y^i,Y^i,y^j,Y^j)\mapsto (x^i,(m^i_0)^2,x^j,(m^j_0)^2,y^i,p^i,y^j,p^j),
\end{align*}
where $(m^i_0)^2:=x^iX^i,$ $(m^j_0)^2:=x^jX^j,$ $p^i:=Y^i-y^i,$ $p^j:=Y^j-y^j,$
to obtain
\begin{align}\label{ext_h3a}
\begin{aligned}
 x^i{'}=& x^j\frac{m^i_0x^i+m^j_0x^j}{m^j_0x^i+m^i_0x^j},&
 x^j{'}=&x^i\frac{m^i_0x^i+m^j_0x^j}{m^j_0x^i+m^i_0x^j}, \\
  m^i_0{'}=&m^i_0, & m^j_0{'}=&m^j_0,\\
  y^i{'}=&y^j+A,&y^j{'}=&y^i+A,\\
  p^i{'}=&-p^i,&p^j{'}=&-p^j
  \end{aligned}
\end{align}
   where 
   \begin{align*}
   A:=-\left((m^i_0)^2-(m^j_0)^2\right)\frac{x^ix^j(y^i-y^j)}{(m^i_0x^i+m^j_0x^j)(m^j_0x^i+m^i_0x^j)}+x^j\left(\frac{m^j_0p^i}{m^i_0x^i+m^j_0x^j}
  +\frac{m^i_0p^j}{m^j_0x^i+m^i_0x^j}\right).
   \end{align*}
Note that the variable $y$ could be considered as an additional degree of freedom.
\end{remark}

\begin{remark}
An equivalent description of the relativistic elastic collisions of particles  is obtained by   
\begin{align*}
\m^i&=\begin{pmatrix}
0&z^{i,0}+z^{i,1}\\
z^{i,0}-z^{i,1}&0
\end{pmatrix}, &v^i&=v^i{'}=v^j=v^j{'}=:c_v\;(\mbox{constant}),
\end{align*}
where $z^{i,0},z^{i,1}\in\mathbb{C}.$  Then (\ref{TheSystemI}) reads
\begin{align} \label{relcol}
\mathbf{z^{i}{'}}=&\mathbf{z^j}+k(\mathbf{z^i},\mathbf{z^j})(\mathbf{z^i}+\mathbf{z^j}),&
\mathbf{z^{j}{'}}=&\mathbf{z^i}-k(\mathbf{z^i},\mathbf{z^j})(\mathbf{z^i}+\mathbf{z^j})\;, 
\end{align}
where $\mathbf{z^i}=(z^{i,0},z^{i,1})$, $\mathbf{z^i{'}}=(z^{i,0}{'},z^{i,1}{'})$, 
\begin{equation*} \label{K}
k(\mathbf{z^i},\mathbf{z^j})=\frac{\langle\mathbf{z^i},\mathbf{z^j}\rangle-\langle\mathbf{z^j},\mathbf{z^j}\rangle}{\langle\mathbf{z^i}+\mathbf{z^j},\mathbf{z^i}
+\mathbf{z^j}\rangle}\;,
\end{equation*}
and $\langle \ , \ \rangle$ denotes the bilinear form  $\langle\mathbf{z^i},\mathbf{z^j}\rangle:=z^{i,0} z^{j,0}-z^{i,1} z^{j,1}$. 
The map 
$\mathbf{R}:(\mathbf{z^i},\mathbf{z^j})\mapsto (\mathbf{z^i{'}},\mathbf{z^j{'}})$ represents the transformation of the momentum-energy vectors of two particles under relativistic collision \cite{KOULOUKAS:2023}. In this setting, the vectors $\mathbf{z^{i}}$ and $\mathbf{z^{i}{'}}$ correspond to the momentum-energy vectors of the two particles before and after collision, that is  
$$  
\mathbf{z^{i}}=(z^{i,0},z^{i,1}):=\left(\frac{E^i}{c},p^i\right),\quad   \mathbf{z^{i}{'}}=(z^{i,0}i{'},z^{i,1}{'}):=\left(\frac{E^i{'}}{c},p^i{'}\right), \; i\in \mathbb{N},$$
where $E^i,E^i{'}$ denote the relativistic energy of the two particles and $p^i,p^i{'}$ their momenta before and after collision respectively. 
\end{remark}

\section{Elastic collisions as maps} \label{sec-maps}

In the abelian, case system (\ref{TheSystemI})  
can be solved uniquely
with respect to the primed variables $v^{i}{'}$, $v^{j}{'}$  $\m^{i}{'}$, $\m^{j}{'}$  in terms of the un-primed ones. Provided that  $m^{i}+m^{j}\neq 0$ we obtain the maps
\begin{align} \label{YB_EC}
R_{ij}:(v^{i},m^{i}; v^{j},m^{j})\mapsto & \left(v^{i}{'},m^{i}{'}; v^{j}{'},m^{j}{'}\right),
\end{align}
where
\begin{align*} 
\begin{aligned}
v^{i}{'}=v^{j}+\frac{m^{i}-m^{j}}{m^{i}+m^{j}}\left(v^{i}-v^{j}\right), \qquad m^{i}{'}=m^{i},\\
v^{j}{'}=v^{i}+\frac{m^{i}-m^{j}}{m^{i}+m^{j}}\left(v^{i}-v^{j}\right), \qquad m^{j}{'}=m^{j}.
\end{aligned}
\end{align*}
Maps $R_{ij}$ will be referred to as {\em elastic collision maps}.
 The elastic collision  maps have the Yang-Baxter property, that is if we take three particles with given velocities $v^1$, $v^2$ and $v^3,$
after the interaction of the $1st$ particle with the $2nd$, followed by the interaction of the $1st$ with the $3rd$ and finally of the $2nd$ particle with  the $3rd$, the outgoing velocities   are the same if the order of the three interactions is reversed (see Figure \ref{fig00}).

%\input{fig000.tex}

%\tdplotsetmaincoords{75}{50} % to reset previous setting
\tdplotsetmaincoords{0}{50}
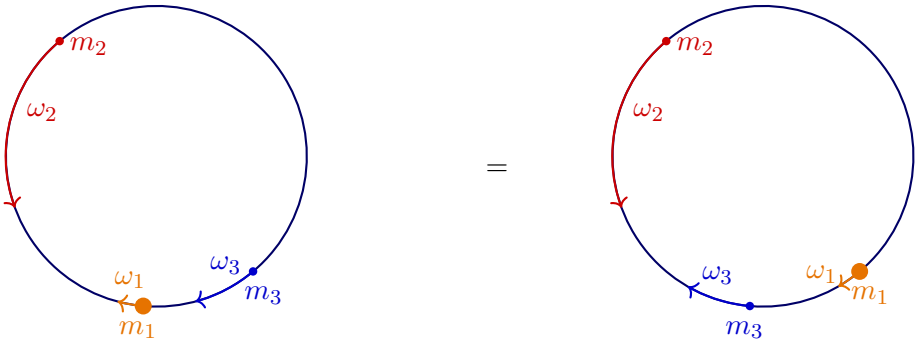
\begin{figure}[htb]\label{fig00}
\begin{center}\hspace{1.6cm}
\begin{minipage}[htb]{0.4\textwidth} \tikzcdset{every label/.append style = {font = \large}}
%\tdplotsetmaincoords{0}{50}
\begin{tikzpicture}[scale=2,tdplot_main_coords,rotate around x=90]

  % variables
  \def\rvec{0.0}
  \def\thetavec{40}
  \def\phivec{70}
  \def\R{1.0}
  \def\w{0.0}

  % axes
  \coordinate (O) at (0,0,0);
  %\draw[thick,->] (0,0,0) -- (1,0,0) node[below left]{$x$};
  %\draw[thick,->] (0,0,0) -- (0,1,0) node[below right]{$y$};
  %\draw[thick,->] (0,0,0) -- (0,0,1) node[below right]{$z$};
  %\tdplotsetcoord{P}{\rvec}{\thetavec}{\phivec}

  % circle - LHC
  \tdplotdrawarc[thick,rotate around x=90,black!60!blue]{(\R,0,0)}{\R}{0}{360}{}{}

  % nodes
  \fill[radius=0.8pt,black!20!red]     (O) circle node[left=4pt,below=2pt,right=4pt] {$m_2$};
  \fill[radius=0.8pt,black!20!blue]     (2*\R,0,0) circle    node[right=4pt,below=2pt] {$m_3$};
  \fill[radius=1.6pt,black!10!orange]     ({\R*sqrt(2)/2+\R},0,{ \R*sqrt(2)/2}) circle     node[left=2pt,below=2pt,scale=1.0] {$m_1$};

%\tdplotdrawarc[->,thick,rotate around x=89,red]{({\R*sqrt(2)/2+\R},0,{ \R*sqrt(2)/2})}{\R}{0}{90}{}{}

%arcs
\tdplotdrawarc[->,thick,rotate around x=89,black!20!blue]{ (\R,0,0)}{\R}{0}{25}{above=2pt}{$\omega_3$};

\tdplotdrawarc[->,thick,rotate around x=89,black!10!orange]{ (\R,0,0)}{\R}{45}{55}{above=2pt}{$\omega_1$};

\tdplotdrawarc[->,thick,rotate around x=89,black!20!red]{ (\R,0,0)}{\R}{180}{110}{above=2pt,right=2pt}{$\omega_2$};
\end{tikzpicture}
\end{minipage}  =  \hspace{0.9cm}
\begin{minipage}[htb]{0.4\textwidth} \tikzcdset{every label/.append style = {font = \large}}
%\tdplotsetmaincoords{0}{50}
\begin{tikzpicture}[scale=2,tdplot_main_coords,rotate around x=90]

  % variables
  \def\rvec{0.0}
  \def\thetavec{40}
  \def\phivec{70}
  \def\R{1.0}
  \def\w{0.0}

  % axes
  \coordinate (O) at (0,0,0);
  %\draw[thick,->] (0,0,0) -- (1,0,0) node[below left]{$x$};
  %\draw[thick,->] (0,0,0) -- (0,1,0) node[below right]{$y$};
  %\draw[thick,->] (0,0,0) -- (0,0,1) node[below right]{$z$};
  %\tdplotsetcoord{P}{\rvec}{\thetavec}{\phivec}

  % circle - LHC
  \tdplotdrawarc[thick,rotate around x=90,black!60!blue]{(\R,0,0)}{\R}{0}{360}{}{}

  % nodes
  \fill[radius=0.8pt,black!20!red]     (O) circle node[left=4pt,below=2pt,right=4pt] {$m_2$};
  \fill[radius=1.6pt,black!10!orange]     (2*\R,0,0) circle    node[right=4pt,below=2pt] {$m_1$};
  \fill[radius=0.8pt,black!20!blue]     ({\R*sqrt(2)/2+\R},0,{ \R*sqrt(2)/2}) circle     node[left=2pt,below=2pt,scale=1.0] {$m_3$};

%\tdplotdrawarc[->,thick,rotate around x=89,red]{({\R*sqrt(2)/2+\R},0,{ \R*sqrt(2)/2})}{\R}{0}{90}{}{}

%arcs
\tdplotdrawarc[->,thick,rotate around x=89,black!10!orange]{ (\R,0,0)}{\R}{0}{10}{above=2pt,left=0.5pt}{$\omega_1$};
\tdplotdrawarc[->,thick,rotate around x=89,black!20!blue]{ (\R,0,0)}{\R}{45}{70}{above=2pt}{$\omega_3$};
\tdplotdrawarc[->,thick,rotate around x=89,black!20!red]{ (\R,0,0)}{\R}{180}{110}{above=2pt,right=2pt}{$\omega_2$};

\end{tikzpicture}

\end{minipage}
\end{center}
\caption{The Yang-Baxter relation, realised as elastic collision of three particles moving on a circle. That is the outgoing velocities after the interaction of the 1st particle with the 2nd, followed by the interaction of the 1st with the 3rd and finally of the 2nd particle with  the 3rd (left figure), are the same if the order of the three interactions is reversed (right figure).
}\label{fig00}
\end{figure}

Alternatively, the system (\ref{TheSystemI}) can be solved for $v^{i}{'}$, $m^{i}{'}$ $v^{j}$, $m^{j}$ in terms of $v^{i}$, $m^{i}$ $v^{j}{'}$, $m^{j}{'}$ provided that  $m^{i}-m^{j}\neq 0$ to obtain
\begin{align} \label{CM_EC}
Q_{ij}:(v^{i},m^{i}; v^{j}{'},m^{j}{'})\mapsto & \left(v^{i}{'},m^{i}{'}; v^{j},m^{j}\right),
\end{align}
where
\begin{align*} 
\begin{aligned}
v^{i}{'}=v^{j}{'}+\frac{m^{i}+m^{j}{'}}{m^{i}-m^{j}{'}}\left(v^{i}-v^{j}{'}\right),&&m^{i}{'}=m^{i},\\
v^{j}=v^{i}+\frac{m^{i}+m^{j}{'}}{m^{i}-m^{j}{'}}\left(v^{i}-v^{j}{'}\right),&&m^{j}=m^{j}{'},
\end{aligned}
\end{align*}
Maps $Q_{ij},$ will be called {\em companion  elastic collision maps}.

Note that in the non-abelian case the problem of expressing some variables of the system (\ref{TheSystemI}) as  functions (maps) of the remaining ones  is more subtle.
In detail, in order to avoid indeterminacies we have to consider  that  the velocities and the masses of particles that participate in the collisions to be elements of a division ring.
Also, as we shall see, in order to obtain the non-abelian elastic collision map, the  solution of   a system of Sylvester equations is required.
In this Section, we address all these just mentioned issues and we investigate the properties of the obtained maps. 
 First we recall the definition of Yang-Baxter maps and the definition of $3D-$compatible maps
and  we show how these notions are mutually related. 
 Next, we show that the  maps (\ref{CM_EC}) and their non-abelian versions are  $3D-$compatible,  whereas the  maps (\ref{YB_EC}) and their non-abelian versions are Yang-Baxter. 

We would like to mention that the conservation relations (\ref{Re_1})-(\ref{Re_3}), which are equivalent to the conservation of the relativistic momentum and energy, can be considered as the defining relations of the following map
\begin{align*}
R_{ij}:(x^i,x^j)\mapsto (x^i{'},x^j{'})=\left(x^j\frac{m^i_0x^i+m^j_0x^j}{m^j_0x^i+m^i_0x^j},x^i\frac{m^i_0x^i+m^j_0x^j}{m^j_0x^i+m^i_0x^j}\right).
\end{align*}
This map is equivalent with a Yang--Baxter map that was referred to as $H_{III}^A$ in \cite{Papageorgiou:2010}.    While relations (\ref{ext_h3a}) can be considered as the defining relations of a two-component Yang--Baxter map that extends  $H_{III}^A$. This two-component map explicitly reads
\begin{align*}
\hat R_{ij}:(x^i,y^i,p^i;x^j,y^j,p^j)\mapsto & (x^i{'},y^i{'},p^i{'};x^j{'},y^j{'},p^j{'})\\
&=\left(x^j\frac{m^i_0x^i+m^j_0x^j}{m^j_0x^i+m^i_0x^j},y^j+A,-p^i;x^i\frac{m^i_0x^i+m^j_0x^j}{m^j_0x^i+m^i_0x^j
},y^i+A,-p^j\right),
\end{align*}
 where 
   \begin{align*}
   A:=-\left((m^i_0)^2-(m^j_0)^2\right)\frac{x^ix^j(y^i-y^j)}{(m^i_0x^i+m^j_0x^j)(m^j_0x^i+m^i_0x^j)}+x^j\left(\frac{m^j_0p^i}{m^i_0x^i+m^j_0x^j}
  +\frac{m^i_0p^j}{m^j_0x^i+m^i_0x^j}\right).
   \end{align*}

\subsection{ Yang-Baxter  and $3D-$compatible maps  } \label{section2}
%%%%%%%%%%%%

 Let $\mathcal{X}$ be any set. 

 \begin{definition}[$3D-$compatible maps \cite{ABS:YB}]\label{Def1}
 Let $Q: \mathcal{X} \times \mathcal{X}\ni({\bf x},{\bf y})\mapsto ({\bf u}, {\bf v})=(f({\bf x},{\bf y}),g({\bf x},{\bf y})) \in \mathcal{X} \times \mathcal{X},$ be a map and  $Q_{ij}$  $i\neq j\in\{1,2,3\},$ be the maps that act as $Q$ on the $i-$th and $j-$th factor of $\mathcal{X} \times \mathcal{X}\times \mathcal{X}$ and as identity to the remaining factor.  In detail we have
\begin{align*}
Q_{12}:({\bf x},{\bf y},{\bf z})\mapsto( {\bf x}_2, {\bf y}_1,{\bf z})=(f({\bf x},{\bf y}),g({\bf x},{\bf y}),{\bf z}),\\
Q_{13}:({\bf x},{\bf y},{\bf z})\mapsto( {\bf x}_3,{\bf y},{\bf z}_1)=(f({\bf x},{\bf z}),{\bf y},g({\bf x},{\bf z})),\\
Q_{23}:({\bf x},{\bf y},{\bf z})\mapsto({\bf x}, {\bf y}_3, {\bf z}_2)=({\bf x},f({\bf y},{\bf z}),g({\bf y},{\bf z})).
\end{align*}

The map $Q: \mathcal{X} \times \mathcal{X}\rightarrow \mathcal{X} \times \mathcal{X}$ will be called {\em 3D-compatible} or {\em 3D-consistent map}  if it holds
${{\bf x}_{23}}={ {\bf x}_{32}},$  ${ {\bf y}_{13}}={{\bf y}_{31}},$ ${{\bf z}_{12}}={{\bf z}_{21}},$ that is
\begin{align}\label{3d:comp:def1}
f( {\bf x}_3,{\bf y}_3)=f({\bf x}_2,{\bf z}_2),&&g( {\bf x}_3, {\bf y}_3)=f( {\bf y}_1, {\bf z}_1),&&g( {\bf x}_2, {\bf z}_2)=g( {\bf y}_1,{\bf z}_1).
\end{align}
\end{definition}

\begin{definition}[Yang-Baxter maps \cite{Sklyanin:1988,Drinfeld:1992}]
A map $R: \mathcal{X} \times \mathcal{X}\ni({\bf x},{\bf y})\mapsto ({\bf u}, {\bf v})=(s({\bf x},{\bf y}),t({\bf x},{\bf y}))\in \mathcal{X}\times \mathcal{X},$ will be called a {\em Yang-Baxter map} if it satisfies %the {\em Yang-Baxter relation}
\begin{align} \label{YANG_BAXTER}
R_{12}\circ R_{13}\circ R_{23}= R_{23}\circ R_{13}\circ R_{12},
\end{align}
where $R_{ij}$ $i\neq j\in\{1,2,3\},$ denotes the maps that act as  $R$ on the $i-$th and the $j-$th factor of $\mathcal{X}\times \mathcal{X}\times \mathcal{X},$ and as identity to the remaining factor.
\end{definition}

The first instances of Yang-Baxter maps  appeared in \cite{Sklyanin:1988,Drinfeld:1992}.  The term {\em Yang-Baxter maps} was introduced in \cite{Bukhshtaber:1998,Veselov:20031} to refer to maps that serve as set-theoretical solutions of the Yang--Baxter equation. The Yang--Baxter equation
originally appeared  in quantum physics \cite{McGuire_1964,Yang:1967} and in statistical mechanics \cite{Baxter:1972,Baxter:1982} and it was 
further noticed 
that  it is  equivalent to the braid group relations \cite{Artin_1925,Artin_1947}. Within this identification, the  Yang-Baxter theory emerged in various areas of mathematics such as
quantum groups (Hopf algebras,
von Neumann algebras), knot theory and, most importantly
  from the point of view of this article,  
the theory of discrete integrable systems \cite{Franks-book}. 

\begin{definition}[Birational maps]
An invertible   map $R: \mathcal{X} \times \mathcal{X}\ni({\bf x},{\bf y})\mapsto ({\bf u},{\bf v}) \in \mathcal{X} \times \mathcal{X}$   will be called {\em birational}, if both the map $R$ and its inverse $R^{-1}: \mathcal{X} \times \mathcal{X}\ni({\bf u},{\bf v})\mapsto ({\bf x},{\bf y}) \in \mathcal{X} \times \mathcal{X},$ are rational maps.
\end{definition}

\begin{definition}[Quadrirational maps and their companion maps \cite{Etingof_1999,ABS:YB}]
A map $R: \mathcal{X} \times \mathcal{X}\ni({\bf x},{\bf y})\mapsto ({\bf u},{\bf v}) \in \mathcal{X} \times \mathcal{X}$   will be called {\em quadrirational}, if both the map $R$ and the so-called {\em companion map} $R^c: \mathcal{X} \times \mathcal{X}\ni({\bf x},{\bf v})\mapsto ({\bf u},{\bf y}) \in \mathcal{X} \times \mathcal{X},$ are birational maps.
\end{definition}

%%%%%%%%%%%%
\begin{prop}\label{prop_ent1} \cite{ABS:YB}
Map $Q:({\bf x},{\bf y})\mapsto \left(f({\bf x},{\bf y}),g({\bf x},{\bf y})\right)$ is a $3D-$compatible map, iff 
its companion map is Yang Baxter map.
\end{prop}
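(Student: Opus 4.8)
The statement is an equivalence, and I plan to obtain both implications at once by evaluating the Yang--Baxter relation (\ref{YANG_BAXTER}) for $Q^c$ on one cleverly chosen triple and reading off the three scalar conditions (\ref{3d:comp:def1}) one component at a time. The first step is to unpack the companion map. By definition $Q^c({\bf x},{\bf v})=({\bf u},{\bf y})$ means precisely ${\bf u}=f({\bf x},{\bf y})$ and ${\bf v}=g({\bf x},{\bf y})$, so evaluating $Q^c$ on a pair $({\bf p},{\bf q})$ amounts to solving ${\bf q}=g({\bf p},{\bf s})$ for the hidden variable ${\bf s}$ and returning $(f({\bf p},{\bf s}),{\bf s})$. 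In the edge notation of Definition \ref{Def1} this reads $Q^c({\bf x},{\bf y}_1)=({\bf x}_2,{\bf y})$; that is, $Q^c$ trades a $1$-shift in the second slot for a $2$-shift in the first.

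I would then feed the relation (\ref{YANG_BAXTER}) the triple $({\bf x},{\bf y}_1,{\bf z}_{21})$, where ${\bf x}_2=f({\bf x},{\bf y})$, ${\bf y}_1=g({\bf x},{\bf y})$, ${\bf z}_2=g({\bf y},{\bf z})$ and ${\bf z}_{21}:=g({\bf x}_2,{\bf z}_2)$. This is no loss of generality, since the assignment $({\bf x},{\bf y},{\bf z})\mapsto({\bf x},{\bf y}_1,{\bf z}_{21})$ is a bijection and so such a triple is an arbitrary one in disguise. The key computation is to run both composition orders. On the side $Q^c_{23}\circ Q^c_{13}\circ Q^c_{12}$ every hidden solve is resolved by a defining relation alone, the three applications of $Q^c$ reproduce three faces of the consistency cube, and the output is $({\bf x}_{23},{\bf y}_3,{\bf z})$ with no extra assumptions. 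On the side $Q^c_{12}\circ Q^c_{13}\circ Q^c_{23}$ the hidden variables are a priori unconstrained, and requiring that the two outputs coincide makes the three equalities of (\ref{3d:comp:def1}) surface one per component: the third components agree iff $g({\bf y}_1,{\bf z}_1)=g({\bf x}_2,{\bf z}_2)$; given this, the second components agree iff $g({\bf x}_3,{\bf y}_3)=f({\bf y}_1,{\bf z}_1)$; and finally the first components agree iff $f({\bf x}_3,{\bf y}_3)=f({\bf x}_2,{\bf z}_2)$. As these are exactly (\ref{3d:comp:def1}), the Yang--Baxter relation for $Q^c$ holds on the chosen triple iff $Q$ is $3D$-compatible, and the bijection promotes this to the stated equivalence.

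The step I expect to be the main obstacle is the bookkeeping forced by the implicit definition of $Q^c$: each of the six applications of $Q^c$ conceals a solve of ${\bf q}=g({\bf p},{\bf s})$, and the whole argument hinges on verifying that the intermediate variables these solves produce coincide with the cube edges ${\bf x}_2,{\bf y}_1,{\bf x}_3,{\bf z}_1,{\bf y}_3,{\bf z}_2$. This is precisely where the standing hypothesis that $Q^c$ is a genuine single-valued map is used: the solve of ${\bf q}=g({\bf p},{\bf s})$ must return a unique ${\bf s}$ (part of $Q$ being quadrirational), which both legitimises each application of $Q^c$ and makes the reparametrisation $({\bf x},{\bf y},{\bf z})\mapsto({\bf x},{\bf y}_1,{\bf z}_{21})$ invertible. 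Once this dictionary between the implicit intermediates and the cube edges is established, the two directions of the equivalence come out together, because the conditions (\ref{3d:comp:def1}) are simultaneously necessary and sufficient for the componentwise agreement of the two sides of (\ref{YANG_BAXTER}).
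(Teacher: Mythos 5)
Your proposal is correct, but note that the paper itself contains no proof of Proposition \ref{prop_ent1}: it is stated as a quoted result from \cite{ABS:YB}, so there is no in-paper argument to compare against. Your argument --- parametrising the input of the Yang--Baxter relation (\ref{YANG_BAXTER}) by the ``staircase'' triple $({\bf x},{\bf y}_1,{\bf z}_{21})$, running both composition orders, and extracting the three equalities of (\ref{3d:comp:def1}) one component at a time --- is sound (I verified the bookkeeping: the side $Q^c_{23}\circ Q^c_{13}\circ Q^c_{12}$ outputs $\left(f({\bf x}_2,{\bf z}_2),f({\bf y},{\bf z}),{\bf z}\right)$ unconditionally, while on the other side the three hidden solves match the cube edges ${\bf z}_1,{\bf y}_3$ and the doubly shifted value exactly when the three compatibility conditions hold), and it is essentially the standard argument for this equivalence: trading initial data at a corner of the consistency cube for initial data along a staircase. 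The one point to keep explicit is the hypothesis you flag at the end: every hidden solve of ${\bf q}=g({\bf p},{\bf s})$, and likewise the bijectivity of $({\bf x},{\bf y},{\bf z})\mapsto({\bf x},{\bf y}_1,{\bf z}_{21})$, requires $g({\bf p},\cdot)$ to be a bijection of $\mathcal{X}$ for every ${\bf p}$ (or to be treated generically in the birational/quadrirational setting); since this is exactly the condition for the companion map to exist as a single-valued map, it belongs to the hypotheses of the statement rather than constituting a gap in your proof.
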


%%%%%%%%%%%%%%%%%%%%

%\input{fig1.tex}

%%%%%%%%%%%%%%%%%
%%%%%%%%%%%%%%%%%
%%%%%%%%%%%%%%%%%

The  definitions above can be easily extended to $N$-dimensions with $N>3.$

 \begin{definition}[Multidimensionally compatible maps  \cite{ABS:YB}]\label{multiDef}
 Let $Q: \mathcal{X} \times \mathcal{X}\ni({\bf x},{\bf y})\mapsto ({\bf u}, {\bf v})=(f({\bf x},{\bf y}),g({\bf x},{\bf y})) \in \mathcal{X} \times \mathcal{X},$ be a map and  $Q_{ij}$  $i\neq j\in\{1,\ldots,N\},$ be the maps that act as $Q$ on the $i-$th and $j-$th factor of $\mathcal{X}^n $ and as identity to the remaining factor.  In detail we have
\begin{align*}
Q_{ij}:&
({\bf x^1},\ldots,{\bf x}^i,\ldots,{\bf x}^j,\ldots,{\bf x}^n)
\mapsto 
({\bf x}^1,\ldots,{\bf x}^i_j ,\ldots,{\bf x}^j_i, \ldots, {\bf x}^n)&\\
{}&=({\bf x}^1,\ldots,f({\bf x}^i,{\bf x}^j) ,\ldots,g({\bf x}^i,{\bf x}^j), \ldots, {\bf x}^n), &  i\neq j \in \{1,\ldots, N\}
\end{align*}
The map $Q: \mathcal{X} \times \mathcal{X}\rightarrow \mathcal{X} \times \mathcal{X}$ will be called {\em multidimensionally compatible}   if it holds
\begin{align*}
  {\bf x}^i_{jk}=&{\bf x}^i_{kj},&   i\neq j\neq k\neq i\in \{1,\ldots, N\}.
\end{align*}
\end{definition}
\begin{remark}
The companion map $Q^c$ of the multidimensionally compatible map $Q$, satisfies the following Yang-Baxter equations
\begin{align*}
 Q^c_{ij}\circ Q^c_{ik}\circ Q^c_{jk}=& Q^c_{jk}\circ Q^c_{ik}\circ Q^c_{ij},&  i\neq j\neq k\neq i\in \{1,\ldots, N\}.
\end{align*}
\end{remark}

\subsection{The non-abelian  companion elastic collision maps are multidimensionally compatible} \label{topsub}

 Here we prove that the non-abelian extension of (\ref{CM_EC}) that is system (\ref{TheSystemI}),   is multidimensionally compatible. 
 First, we need to extend (\ref{TheSystemI}) to multi-dimensions, that can be done by the following identification
  \begin{align*}
   \left(v^{i},\m^{i},v^{i}{'},\m^{i}{'};v^{j},\m^{j},v^{j}{'},\m^{j}{'}\right) \equiv &  \left(v^{i},\m^{i},v^{i}_{j},\m^{i}_{j};v^{j}_i,\m^{j}_i, v^{j},\m^{j}\right), & i\neq j \in \{1,\ldots, N\}.
   \end{align*}
   In terms of this identification (\ref{TheSystemI}) reads
   \begin{subequations}\label{def_eq}
%\begin{empheq}[left=\empheqlbrace]{align}
\begin{gather} \label{eq:2:1}
 v^{i}_j-v^{j}=v^{j}_i-v^{i},\\ \label{eq:2:2}
\m^{i}_j+\m^{j}=\m^{j}_i+\m^{i},\\ \label{eq:2:3}
\m^{i}_j\m^{j}=\m^{j}_i\m^{i},\\ \label{eq:2:4}
\m^{i}_jv^{j}+v^{i}_j\m^{j}=\m^{j}_iv^{i}+v^{j}_i\m^{i},
\end{gather}
%&\end{empheq}
\end{subequations}
where $i\neq j \in \{1,\ldots, N\}.$ This is a linear set of equations with respect to $v^{i}_j, v^{j}_i, \m^{i}_j, \m^{j}_i.$ Solving this set of equations for
$v^{i}_j, v^{j}_i, \m^{i}_j, \m^{j}_i,$ we obtain  the defining formulas for  the  maps $Q_{ij}$.
Indeed, the non-abelian maps $Q_{ij}$ explicitly read
\begin{align} \label{CM_MAPS_0}
Q_{ij}:(v^{i},\m^{i}; v^{j},\m^{j})\mapsto&  \left(v^{i}_{j},\m^{i}_{j}; v^{j}_i,\m^{j}_i\right), & i\neq j \in \{1,\ldots, N\},
\end{align}
where
%\begin{subequations*}
%\begin{empheq}[left=\empheqlbrace]{align}
\begin{align}\label{CM_MAPS_12}
\m^{i}_j=&K^{i,j}\m^{i}\left(K^{i,j}\right)^{-1},\\ \label{CM_MAPS_120}
 v^{i}_{j}=&\left(\m^{i}_j v^{j}-\m^{j}_i v^{i}-\left(v^{i}-v^{j}\right)\m^{i}\right)\left(K^{i,j}\right)^{-1},
 \end{align}
%\end{empheq}
%\end{subequations*}
and the expressions $K^{i,j}$ are defined by the formulas  \[K^{i,j}:=\m^{i}-\m^{j}.\] 
We will need the following lemma.
\begin{lemma} \label{1st_lemmam}
Consider the maps (\ref{CM_MAPS_0}), then the following relations hold
\begin{enumerate}
\item $K^{i,j}_k+K^{j,k}_i+K^{k,i}_j=0$ (additive closure relation);%$\m^{i}_j-\m^{j}_i+\m^{j}_k-\m^{k]}_j+\m^{k}_i-\m^{i}_k=0$;
\item $\m^{j}_k\left(\m^{i}_k\right)^{-1}\m^{k}_i=\m^{k}_j\left(\m^{i}_j\right)^{-1}\m^{j}_i$ (multiplicative closure relation).
\item The expressions
\begin{align} \label{def_g}
{}^i\Gamma^{j,k}:=K^{i,j}_k K^{i,k}%\left(\m^{i}_k-\m^{j}_k\right)K^{i,j},
\end{align}
are symetric with respect to the interchange $j$ to $k,$ and they satisfy
\begin{align}\label{lemma_31}
{}^i\Gamma^{j,k}\m^{i}\left({}^i\Gamma^{j,k}\right)^{-1} {}^i\Psi^{j,k}+{}^i\Omega^{j,k}=0
\end{align}
where ${}^i\Psi^{j,k}$ and ${}^i\Omega^{j,k}$ are defined as follows
\begin{align}\label{psi}
&{}^i\Psi^{j,k}:=\left(1-\m^{j}_k\left(\m^{i}_k\right)^{-1}\right)\left(v^{k}-
v^{i}\right)-\left(1-\m^{k}_j\left(\m^{i}_j\right)^{-1}\right)\left(v^{j}-v^{i}\right),\\ \label{ksi}
&{}^i\Omega^{j,k}:=\left(\m^{i}_j-\m^{k}_j\right)\left(v^{j}-v^{i}\right)-\left(\m^{i}_k-\m^{j}_k\right)\left(v^{k}-v^{i}\right),
\end{align}
which are clearly antisymmetric with respect to the interchange $j$ to $k.$
\end{enumerate}
\end{lemma}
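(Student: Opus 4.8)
The plan is to derive all three statements from two ingredients already at hand: the defining relations (\ref{eq:2:1})--(\ref{eq:2:4}) and the explicit conjugation formula $\m^{a}_b=K^{a,b}\m^{a}(K^{a,b})^{-1}$ with $K^{a,b}=\m^{a}-\m^{b}$, which is exactly (\ref{CM_MAPS_12}). Throughout I read a shifted difference as the difference of the shifts, $K^{i,j}_k=\m^{i}_k-\m^{j}_k$. Claims (1) and (2) are short and purely formal; the real work sits in (3).

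For the additive closure (1) I would only use (\ref{eq:2:2}), which rearranges to $\m^{a}_b-\m^{b}_a=\m^{a}-\m^{b}$. Summing this over the three pairs $(i,j),(j,k),(k,i)$ makes the right-hand sides telescope to $0$, while regrouping the left-hand side reproduces $-(K^{i,j}_k+K^{j,k}_i+K^{k,i}_j)$, which therefore vanishes. For the multiplicative closure (2) I would use (\ref{eq:2:3}) in the solved form $(\m^{a}_b)^{-1}=\m^{b}(\m^{a})^{-1}(\m^{b}_a)^{-1}$; applying it to $(\m^{i}_k)^{-1}$ collapses the left-hand side,
\[
\m^{j}_k(\m^{i}_k)^{-1}\m^{k}_i=\m^{j}_k\m^{k}(\m^{i})^{-1}=\m^{k}_j\m^{j}(\m^{i})^{-1},
\]
the last equality being (\ref{eq:2:3}) for the pair $(j,k)$, and the same manipulation applied to $(\m^{i}_j)^{-1}$ collapses the right-hand side $\m^{k}_j(\m^{i}_j)^{-1}\m^{j}_i$ to the identical expression.

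In (3) the antisymmetry of ${}^i\Psi^{j,k}$ and ${}^i\Omega^{j,k}$ is immediate upon swapping $j\leftrightarrow k$ in their definitions. For the symmetry of ${}^i\Gamma^{j,k}=K^{i,j}_kK^{i,k}$ I would substitute the conjugation formula into $K^{i,j}_k$ and use $K^{i,k}=K^{i,j}+K^{j,k}$ to rewrite ${}^i\Gamma^{j,k}$ in the two equivalent shapes
\[
K^{i,k}\m^{i}-K^{j,k}\m^{j}(K^{j,k})^{-1}K^{i,k}=K^{i,j}\m^{i}-K^{j,k}\m^{k}(K^{j,k})^{-1}K^{i,j};
\]
the right member is exactly the image of the left under $j\leftrightarrow k$ (recall $K^{k,j}=-K^{j,k}$), so ${}^i\Gamma^{j,k}={}^i\Gamma^{k,j}$.

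The identity (\ref{lemma_31}) is the hard step. I would first rewrite the data through $1-\m^{j}_k(\m^{i}_k)^{-1}=K^{i,j}_k(\m^{i}_k)^{-1}$ and $1-\m^{k}_j(\m^{i}_j)^{-1}=K^{i,k}_j(\m^{i}_j)^{-1}$, and then use the key observation ${}^i\Gamma^{j,k}\m^{i}({}^i\Gamma^{j,k})^{-1}=K^{i,j}_k\m^{i}_k(K^{i,j}_k)^{-1}$. The first term of the product then telescopes and cancels the whole $(v^{k}-v^{i})$--part against ${}^i\Omega^{j,k}$, leaving
\[
\bigl[K^{i,k}_j-K^{i,j}_k\m^{i}_k(K^{i,j}_k)^{-1}K^{i,k}_j(\m^{i}_j)^{-1}\bigr](v^{j}-v^{i}),
\]
so the whole claim reduces to the single non-abelian identity $K^{i,k}_j\m^{i}_j=K^{i,j}_k\m^{i}_k(K^{i,j}_k)^{-1}K^{i,k}_j$. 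This is the true obstacle, and the route through it is to feed in the $\Gamma$--symmetry just proved, in the form $K^{i,k}_j=K^{i,j}_kK^{i,k}(K^{i,j})^{-1}$, together with the conjugation identities $(K^{i,j})^{-1}\m^{i}_j=\m^{i}(K^{i,j})^{-1}$ and $\m^{i}_kK^{i,k}=K^{i,k}\m^{i}$; both sides then collapse to $K^{i,j}_kK^{i,k}\m^{i}(K^{i,j})^{-1}$. The only genuine danger is bookkeeping the left/right order of the non-commuting factors, so I would carry out this final reduction strictly symbolically and never invoke a commutation.
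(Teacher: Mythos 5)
Your proposal is correct -- every step checks out -- and while it follows the same overall skeleton as the paper's proof (items (1), (2), then the symmetry of ${}^i\Gamma^{j,k}$, then the identity (\ref{lemma_31}) via that symmetry plus the conjugation relations), the component arguments are genuinely different in several places. For items (1) and (2) you work directly from the defining edge relations: telescoping $\m^{a}_b-\m^{b}_a=\m^{a}-\m^{b}$ from (\ref{eq:2:2}), and inverting (\ref{eq:2:3}) as $(\m^{a}_b)^{-1}=\m^{b}(\m^{a})^{-1}(\m^{b}_a)^{-1}$ so both sides of the multiplicative closure collapse to $\m^{k}_j\m^{j}(\m^{i})^{-1}$; the paper instead substitutes the solved conjugation formulas (\ref{CM_MAPS_12}) everywhere, so your route is more elementary and makes (2) explicit where the paper only asserts verification. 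For the symmetry of ${}^i\Gamma^{j,k}$ you write it in two shapes exchanged by $j\leftrightarrow k$ using $K^{i,k}=K^{i,j}+K^{j,k}$, whereas the paper factors ${}^i\Gamma^{j,k}=K^{j,k}\,{}^i\Delta^{j,k}$ and proves antisymmetry of ${}^i\Delta^{j,k}$ via the operator identity $\left(1-AB^{-1}\right)^{-1}+\left(1-BA^{-1}\right)^{-1}=1$; your decomposition avoids that identity entirely at the cost of one difference-of-shapes computation. Finally, for (\ref{lemma_31}) the paper rewrites ${}^i\Gamma^{j,k}=\left(1-\m^{j}_k\left(\m^{i}_k\right)^{-1}\right)\m^{i}_k K^{i,k}$ and uses symmetry to telescope both terms of ${}^i\Psi^{j,k}$ at once, while you telescope only the $(v^k-v^i)$ term (via ${}^i\Gamma^{j,k}\m^{i}({}^i\Gamma^{j,k})^{-1}=K^{i,j}_k\m^{i}_k(K^{i,j}_k)^{-1}$) and reduce everything to the single residual identity $K^{i,k}_j\m^{i}_j=K^{i,j}_k\m^{i}_k(K^{i,j}_k)^{-1}K^{i,k}_j$, which you then verify by feeding in $K^{i,k}_j=K^{i,j}_kK^{i,k}(K^{i,j})^{-1}$ and the conjugation identities -- both sides indeed collapse to $K^{i,j}_kK^{i,k}\m^{i}(K^{i,j})^{-1}$. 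The paper's version is slightly more symmetric in presentation; yours isolates more sharply where the non-abelian difficulty actually lives.
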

\begin{proof}
The proof of the lemma is given in Appendix \ref{app0}.
\end{proof}

Now we are ready to formulate the main  Theorem of this Section.

\begin{theorem} \label{theo1m}
Maps (\ref{CM_MAPS_0}) are multidimensionally compatible.
\end{theorem}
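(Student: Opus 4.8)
The plan is to verify Definition \ref{multiDef} directly: since the edge variable is $\mathbf{x}^i=(v^i,\m^i)$, multidimensional compatibility amounts to the two identities $\m^i_{jk}=\m^i_{kj}$ and $v^i_{jk}=v^i_{kj}$ for all distinct $i,j,k$. Throughout I would work from the explicit defining formulas (\ref{CM_MAPS_12}),(\ref{CM_MAPS_120}) together with the three closure relations of Lemma \ref{1st_lemmam}, which is exactly the machinery assembled for this purpose.

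First I would dispatch the mass variables, which is the easy half. Iterating the conjugation formula (\ref{CM_MAPS_12}), the second shift of $\m^i$ is obtained by conjugating $\m^i_j$ by the once-shifted factor $K^{i,k}_j=\m^i_j-\m^k_j$, so that
\begin{align*}
\m^i_{jk}=\left(K^{i,k}_jK^{i,j}\right)\m^i\left(K^{i,k}_jK^{i,j}\right)^{-1}={}^i\Gamma^{k,j}\,\m^i\left({}^i\Gamma^{k,j}\right)^{-1},
\end{align*}
and symmetrically $\m^i_{kj}={}^i\Gamma^{j,k}\,\m^i\left({}^i\Gamma^{j,k}\right)^{-1}$, where ${}^i\Gamma^{j,k}=K^{i,j}_kK^{i,k}$ is the quantity of (\ref{def_g}). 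The symmetry ${}^i\Gamma^{j,k}={}^i\Gamma^{k,j}$ asserted in part (3) of Lemma \ref{1st_lemmam} then yields $\m^i_{jk}=\m^i_{kj}$ at once. As a by-product this identifies the conjugation appearing in (\ref{lemma_31}) with the common value $\m^i_{jk}$, so that relation can be read as $\m^i_{jk}\,{}^i\Psi^{j,k}+{}^i\Omega^{j,k}=0$.

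It remains to treat $v^i$, which is the substantive part. Here I would write out both orders of the velocity update (\ref{CM_MAPS_120}) on the once-shifted data, namely
\begin{align*}
v^i_{jk}=\left(\m^i_{jk}v^k_j-\m^k_{ji}v^i_j-\left(v^i_j-v^k_j\right)\m^i_j\right)\left(K^{i,k}_j\right)^{-1},
\end{align*}
together with the analogue for $v^i_{kj}$ obtained by interchanging $j$ and $k$, and then substitute the base-level expressions (\ref{CM_MAPS_12}),(\ref{CM_MAPS_120}) for the once-shifted quantities $v^i_j,v^k_j,\m^i_j,\ldots$. The additive and multiplicative closure relations, parts (1) and (2) of Lemma \ref{1st_lemmam}, are precisely what is needed to put the two distinct right-denominators $K^{i,k}_j$ and $K^{i,j}_k$, and the various $\m$-products, on a common footing; and $\m^i_{jk}=\m^i_{kj}$ from the first step is used to merge the conjugated mass terms. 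The difference $v^i_{jk}-v^i_{kj}$ should then collapse, up to an overall invertible right factor, onto the combination ${}^i\Gamma^{j,k}\m^i\left({}^i\Gamma^{j,k}\right)^{-1}{}^i\Psi^{j,k}+{}^i\Omega^{j,k}$. Since this combination is antisymmetric under $j\leftrightarrow k$ (as $\m^i_{jk}$ is symmetric while ${}^i\Psi^{j,k}$ and ${}^i\Omega^{j,k}$ are antisymmetric), it is exactly the antisymmetric part that distinguishes the two orderings, and by (\ref{lemma_31}) it vanishes, giving $v^i_{jk}=v^i_{kj}$.

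The main obstacle I anticipate is this last reduction. It is a lengthy noncommutative manipulation in which one must scrupulously respect the left/right placement of every factor and every inverse, and the principal difficulty is to funnel two computations that genuinely begin from different denominators $K^{i,k}_j$ and $K^{i,j}_k$ into the single combination governed by (\ref{lemma_31}). The bookkeeping is exactly what the three parts of Lemma \ref{1st_lemmam} are engineered to control, so the conceptual weight of the theorem is really carried by that lemma, and the proof proper is the organized application of its closure and symmetry statements.
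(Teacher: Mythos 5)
Your proposal is correct and follows essentially the same route as the paper's own proof (Appendix~\ref{app00}): the $\m$-part is settled by deriving $\m^i_{jk}={}^i\Gamma^{j,k}\m^i\left({}^i\Gamma^{j,k}\right)^{-1}$ and invoking the symmetry of ${}^i\Gamma^{j,k}$, and the $v$-part is reduced, exactly as in the paper, to the identity $\left(v^i_{jk}-v^i_{kj}\right){}^i\Gamma^{j,k}=\left({}^i\Gamma^{j,k}\m^i\left({}^i\Gamma^{j,k}\right)^{-1}{}^i\Psi^{j,k}+{}^i\Omega^{j,k}\right)\m^i$, which vanishes by (\ref{lemma_31}); your ``invertible right factor'' is precisely the paper's $\m^i\left({}^i\Gamma^{j,k}\right)^{-1}$. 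The only difference is that you leave the noncommutative bookkeeping of the velocity reduction as a plan rather than executing it, but the target identity and the use of all three parts of Lemma~\ref{1st_lemmam} match the paper's computation.
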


\begin{proof}
 The proof of this theorem is presented in Appendix \ref{app00}. 
\end{proof}

\subsection{Yang-Baxter maps and the Sylvester equation}
Theorem \ref{theo1m} states  that maps (\ref{CM_MAPS_0}) are multidimensionally-compatible. In the following proposition we present the companion maps of  (\ref{CM_MAPS_0}) which are Yang-Baxter maps due to Proposition \ref{prop_ent1}.
\begin{prop}
The companion maps $Q_{ij}^c$ of (\ref{CM_MAPS_0}) read
\begin{align} \label{comp_map}
Q_{ij}^c:(v^{i},\m^{i}; v^{j}_i,\m^{j}_i)\mapsto \left(v^{i}_{j},\m^{i}_{j}; v^{j},\m^{j}\right),&& i\neq j \in \{1,\ldots, N\},
\end{align}
where
\begin{align}\label{def_comp}
\begin{aligned}
  v^{i}_{j}=v^{j}_{i}+h^{i,j}, && v^{j}=v^{i}+h^{i,j},\\
  \m^{i}_{j}=\m^{j}_{i}+\left(g^{i,j}\right)^{-1}, && \m^{j}=\m^{i}-\left(g^{i,j}\right)^{-1},
  \end{aligned}
\end{align}
and $h^{i,j}, g^{i,j}$ satisfy the following system of Sylvester equations
\begin{gather}\label{sylv}
\begin{gathered}
\m^ig^{i,j}-g^{i,j}\m^j_i=1,\\
\left(\m^j_i+\left(g^{i,j}\right)^{-1}\right)h^{i,j}+h^{i,j}\left(\m^i-\left(g^{i,j}\right)^{-1}\right)=u^j_i\left(g^{i,j}\right)^{-1}-\left(g^{i,j}\right)^{-1}u^i.
\end{gathered}
\end{gather}
\end{prop}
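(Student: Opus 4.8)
The plan is to exploit that, by definition, the companion map $Q^c_{ij}$ solves the very same refactorization system (\ref{def_eq}), only with the roles of the variables rearranged: now $v^i,\m^i,v^j_i,\m^j_i$ play the part of the independent input while $v^i_j,\m^i_j,v^j,\m^j$ are the unknowns to be determined. I would show that the two additive equations (\ref{eq:2:1}) and (\ref{eq:2:2}) yield directly the four explicit formulas (\ref{def_comp}) once the auxiliary quantities $h^{i,j}$ and $g^{i,j}$ are introduced, whereas the two remaining quadratic equations (\ref{eq:2:3}) and (\ref{eq:2:4}) collapse, upon substitution of those formulas, into precisely the Sylvester system (\ref{sylv}).

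First I would read off $h^{i,j}$ and $g^{i,j}$ from the additive equations. Rearranging (\ref{eq:2:1}) gives $v^i_j-v^j_i=v^j-v^i$; setting $h^{i,j}:=v^j-v^i$ one obtains at once $v^j=v^i+h^{i,j}$ and $v^i_j=v^j_i+h^{i,j}$, the first line of (\ref{def_comp}). Rearranging (\ref{eq:2:2}) gives $\m^i_j-\m^j_i=\m^i-\m^j$; since $\mathcal{A}$ is a division ring and $\m^i\neq\m^j$ off the singular locus, the element $\m^i-\m^j$ is invertible and one may set $(g^{i,j})^{-1}:=\m^i-\m^j$, whence $\m^j=\m^i-(g^{i,j})^{-1}$ and $\m^i_j=\m^j_i+(g^{i,j})^{-1}$, the second line of (\ref{def_comp}). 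Thus the two additive equations are equivalent to the ansatz (\ref{def_comp}) together with the definitions of $h^{i,j}$ and $g^{i,j}$, and conversely (\ref{def_comp}) satisfies (\ref{eq:2:1}) and (\ref{eq:2:2}) identically for any $h^{i,j}$ and any invertible $g^{i,j}$.

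Next I would substitute (\ref{def_comp}) into the two quadratic equations. Inserting $\m^i_j=\m^j_i+(g^{i,j})^{-1}$ and $\m^j=\m^i-(g^{i,j})^{-1}$ into (\ref{eq:2:3}), the term $\m^j_i\m^i$ cancels on both sides and one is left with $(g^{i,j})^{-1}\m^i-\m^j_i(g^{i,j})^{-1}-(g^{i,j})^{-2}=0$; after left- and right-multiplication by $g^{i,j}$ this becomes $\m^ig^{i,j}-g^{i,j}\m^j_i=1$, the first Sylvester equation. Substituting all four formulas into (\ref{eq:2:4}) and cancelling the common summand $\m^j_iv^i+v^j_i\m^i$, the terms carrying $h^{i,j}$ collect into $\left(\m^j_i+(g^{i,j})^{-1}\right)h^{i,j}+h^{i,j}\left(\m^i-(g^{i,j})^{-1}\right)$ while the remaining terms give $(g^{i,j})^{-1}v^i-v^j_i(g^{i,j})^{-1}$, so that (\ref{eq:2:4}) becomes the second relation in (\ref{sylv}). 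Reversing these manipulations shows conversely that any pair $(g^{i,j},h^{i,j})$ solving (\ref{sylv}) renders (\ref{def_comp}) a solution of (\ref{eq:2:3}) and (\ref{eq:2:4}), which establishes the stated equivalence.

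The only step that is not a mere substitution, and which I expect to be the main obstacle, is the well-definedness of the resulting object as a map. The reduction requires $\m^i-\m^j$ to be invertible, which is exactly why one must pass to a division ring; more substantially, to conclude that $Q^c_{ij}$ is a genuine single-valued map one needs the Sylvester system (\ref{sylv}) to possess a unique solution $(g^{i,j},h^{i,j})$. The first equation $\m^ig^{i,j}-g^{i,j}\m^j_i=1$ is a standard Sylvester equation, uniquely solvable precisely when the linear operator $X\mapsto\m^iX-X\m^j_i$ is invertible (in the matrix realisation, when $\m^i$ and $\m^j_i$ have disjoint spectra); once $g^{i,j}$ is fixed, the second equation is again a linear, Sylvester-type equation for $h^{i,j}$, governed by the analogous non-resonance condition. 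I would therefore isolate the invertibility of the Sylvester operator as the substantive hypothesis, the algebraic reduction itself being routine.
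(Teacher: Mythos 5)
Your proposal is correct and follows essentially the same route as the paper: you introduce the auxiliary quantities $h^{i,j}$ and $g^{i,j}$ via the two additive equations (\ref{eq:2:1}), (\ref{eq:2:2}), obtain the four formulas (\ref{def_comp}), and then substitute into the quadratic equations (\ref{eq:2:3}), (\ref{eq:2:4}) to reduce them to the Sylvester system (\ref{sylv}). Your closing observation that single-valuedness of $Q^c_{ij}$ hinges on unique solvability of the Sylvester equations (invertibility of $X\mapsto\m^i X-X\m^j_i$) is a point the paper leaves implicit, but it does not alter the argument.
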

\begin{proof}
In order to find the companion maps $Q_{ij}^c$ we need to solve (\ref{def_eq}) for $v^{i}_{j},\m^{i}_{j}, v^{j}, \m^{j}$ in terms of $v^{i},\m^{i}, v^{j}_i,\m^{j}_i.$ Omitting the identity solution $v^{i}_{j}=v^{j}_{i},\m^{i}_{j}=\m^{j}_{i}, v^{j}=v^{i}, \m^{j}=\m^{i},$ we consider the auxiliary variables $g^{i,j},$ and $h^{i,j},$ defined by
\begin{gather}\label{auxi}
  \m^{i}_{j}=\m^{j}_{i}+(g^{i,j})^{-1},\quad v^{i}_{j}=v^{j}_{i}+h^{i,j}.
\end{gather}
Substituting the expressions above into (\ref{eq:2:1}) and (\ref{eq:2:2}), we respectively obtain
\begin{gather}\label{auxi2}
  \m^{j}=\m^{i}-(g^{i,j})^{-1},\quad v^{j}=v^{i}+h^{i,j}.
\end{gather}
Substituting  (\ref{auxi}) and (\ref{auxi2}) into (\ref{eq:2:2}) and (\ref{eq:2:3}), they become exactly the system of Sylvester equations  (\ref{sylv}) and that completes the proof.
\end{proof}
We remark that in a similar manner we can find the inverse of the maps $Q_{ij}^c$, hence the original maps $Q_{ij}$ are quadrirational. Further recent developments on non-Abelian Yang-Baxter maps can be found in \cite{Noumi:2020,Kassotakis:2:2021,Kassotakis:2022b,Rizos:2024}.

%%%%%%%%%%%%%%%%%
%%%%%%%%%%%%%%%%%
\section{Non-Abelian elastic collision maps as difference systems}\label{re0}
%%%%%%%%%%%%%%%%%
%%%%%%%%%%%%%%%%%

\subsection{ Reinterpretation of a multidimensionaly compatible map as a system of difference equations on the ${\mathbb Z}^N$ graph} \label{re}
There is a natural association of a map with a difference system  defined on the edges of an elementary quadrilateral of the $\mathbb{Z}^2$ graph. Indeed, a map $R: \mathcal{X} \times \mathcal{X}\ni (\bf x,\bf y) \mapsto  (\bf x{'},\bf y{'}) \in  \mathcal{X} \times \mathcal{X},$ can be considered as a difference system defined on the edges of an elementary quadrilateral of the $\mathbb{Z}^2$ graph  by  making the following identifications
\begin{align} \label{notation1}
\begin{aligned}
{\bf x}\equiv  {\bf x}_{m+1/2,n},&&{\bf y}\equiv  {\bf y}_{m,n+1/2}\\
{\bf x{'}}\equiv  {\bf x}_{m+1/2,n+1},&&{\bf y{'}}\equiv  {\bf y}_{m+1,n+1/2},
\end{aligned} && m,n\in \mathbb{Z},
\end{align}
 so that the ``primes" have been interpreted as increments on the independent variables.
Moreover, we can adopt the compendious notation (see Figure \ref{fig1})
\begin{align*}
\begin{aligned}
{\bf x}:={\bf x}_{m+1/2,n},&& {\bf y}:={\bf y}_{m,n+1/2}, && {\bf x}_1:={\bf x}_{m+3/2,n}, && etc.\\
 {\bf x}_2:={\bf x}_{m+1/2,n+1}\equiv {\bf u},  &&{\bf y}_1:={\bf y}_{m+1,n+1/2}\equiv {\bf v}, && {\bf y}_2:={\bf y}_{m,n+3/2},&& etc.
\end{aligned} && m,n\in \mathbb{Z},
\end{align*}
where with subscripts we denoted discrete shifts on the associated $\mathbb{Z}^2$ graph. We extend our notation to the $\mathbb{Z}^N$ graph as follows
\begin{align} \label{notation1.1}
\begin{aligned}
{\bf x}^{i}:={\bf x},&& {\bf x}^{j}:={\bf y},&&{\bf x}^{i}_j:={\bf x}_j,&& {\bf x}^{j}_i:={\bf y}_i,
\end{aligned} && i\neq j \in \{1,\ldots, N\}.
\end{align}
 Note that in this notation the superscripts  represent the associated edges of the $\mathbb{Z}^N$ graph where the variables are assigned, e.g.  see Figure \ref{fig1}.
When ${\bf x}$ represents an $M\in \mathbb{N}$ component vector, so ${\bf x}^{i}$ stands for the vector ${\bf x}$ assigned on the $i-$th edge of the $\mathbb{Z}^N$ graph,  we denote its components as follows
\begin{align*}
  {\bf x}^{i}=\left(x^{1,i},\ldots, x^{M,i}\right),&&i \in \{1,\ldots, N\},&& M\in \mathbb{N}.
\end{align*}
In this concise notation introduced above, the difference system $ {\bf x}^{i}_j=F( {\bf x}^{i}, {\bf x}^{j}),$  is associated with the map $({\bf x}^{i},{\bf x}^{j})\mapsto \left({\bf x}^{i}_j,{\bf x}^{j}_i\right)=\left(F({\bf x}^{i},{\bf x}^{j}),F({\bf x}^{j},{\bf x}^{i})\right),$ where $F$ a rational function. This difference system will be called multidimensional compatible  {\em iff}
\begin{align} \label{mult-com}
 {\bf x}^{i}_{jk}=&{\bf x}^{i}_{kj}, & i\neq j\neq k\neq i\in \{1,\ldots, N\}.
\end{align}

\begin{figure}[h]
\begin{center}
\begin{minipage}[htb]{0.4\textwidth} \tikzcdset{every label/.append style = {font = \small}}
\adjustbox{scale=0.85,center}{
\begin{tikzcd}[row sep=1.5in, column sep = 1.5in,every arrow/.append style={dash}]
  \arrow[d,"\bf{y^i}\equiv \bf{y}_{m,n+1/2}" {description},crossing over,sloped,anchor=north]  \bff{\phi}_{m,n+1}  \arrow[r, "\bff{x{'}}\equiv\bf{x}_{m+1/2,n+1}" {description},crossing over]&   \bff{\phi}_{m+1,n+1}\arrow{d} \\
   \bff{\phi}_{m,n} \arrow[r, "\bf{x}\equiv\bf{x}_{m+1/2,n}" {description},crossing over]&  \bff{\phi}_{m+1,n} \arrow[u," \bf{y{'}}\equiv\bf{y}_{m+1,n+1/2}" {description},crossing over,sloped]
\end{tikzcd}}
\captionsetup{font=footnotesize}%,calcwidth=0.8\columnwidth
\captionof*{figure}{(a) Descriptive notation}
\end{minipage}\hspace{0.9cm}
\begin{minipage}[htb]{0.4\textwidth} \tikzcdset{every label/.append style = {font = \large}}
\adjustbox{scale=0.85,center}{
\begin{tikzcd}[row sep=1.5in, column sep = 1.5in,every arrow/.append style={dash}]
  {\bff \phi}_2 \arrow[r,"{\bf x}_2" {description},crossing over]\arrow[d,"{\bf y}" {description},crossing over]&  {\bff \phi}_{12}\arrow{d} \\
  {\bff \phi} \arrow[r,"{\bf x}" {description},crossing over]& {\bff \phi}_1\arrow[u,"{\bf y}_1" {description},crossing over]
\end{tikzcd}}
\captionsetup{font=footnotesize}%,calcwidth=0.8\columnwidth
\captionof*{figure}{(b) Compendious notation}
\end{minipage}\
\caption{Variables assigned on vertices and edges of  an elementary cell of the $\mathbb{Z}^2$ graph} \label{fig1}
\end{center}
\end{figure}
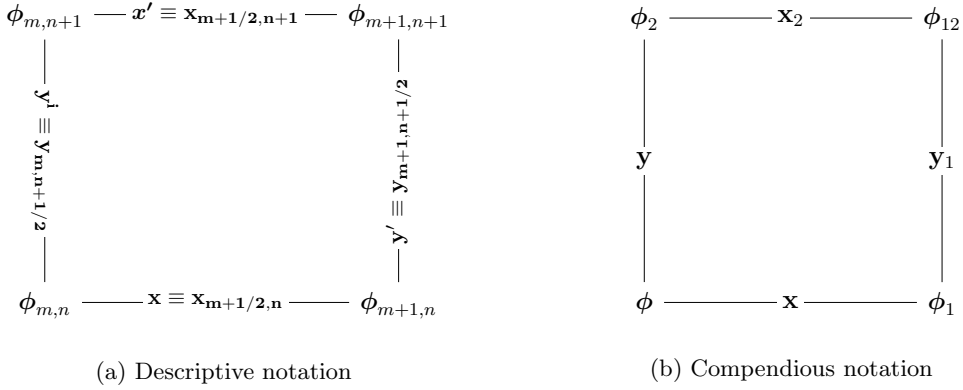
For the rest of this article and unless else specified,  we have
\begin{align*}
 {\bf x}^i:=&{\bf x}=\left(\mu^i,v^i\right), & {\bf x}^j:=&{\bf y}=\left(\mu^j,v^j\right), & i\neq j \in \{1,\ldots, N\}.
\end{align*}

 With the reinterpretation considered above,
the abelian version of elastic collision map appeared in the discrete integrable systems theory in \cite{Kouloukas-2017,Dimakis2019,KOULOUKAS:2023}, see also \cite{KaNie:2018} for their connection with discrete Hirota's Korteweg de Vries equation.

\subsection{Non-Abelian  companion elastic collision maps and associated difference systems}
%%%%%%%%%%%%%%%%%%%%%%%%%%%%%%%%%%%%%%%%%%%%%%%%%%%%%%%%%%%%%%%%%%%%%%%%%%%%%%%%%%
\label{nelco}

With the reinterpretation  we presented in the previous Section, we can treat the defining relations of the  maps  (\ref{CM_MAPS_0}) that is (\ref{CM_MAPS_12}) and (\ref{CM_MAPS_120}),
as  difference systems that relate the two-component variables defined on the edges of an elementary quad of the ${\mathbb Z}^N$ graph.

\begin{theorem} \label{theo1}
Consider the following  Lax system
\begin{align} \label{cc:lp}
\Psi_i= L(v^{i},\m^{i};\lambda) \Psi, && \Psi_j= L(v^{j},\m^{j};\lambda) \Psi,
\end{align}
where the Lax matrix $L$ is given by
\begin{align*}
L(v^{i},\m^{i};\lambda):=\begin{pmatrix}
\m^{i}+\lambda&\lambda v^{i}\\
0&\m^{i}-\lambda
\end{pmatrix}.
\end{align*}
Let all entries of the Lax matrix be assumed to belong to an associative algebra $\mathcal{A}$, and
  $\lambda$, which is referred to as the spectral parameter,  assumed to be an element of the center of the algebra. The following holds.
\begin{enumerate}
\item  The compatibility conditions of the  Lax system (\ref{cc:lp}) i.e.
\begin{equation} \label{linearp}
L(v^{i}_j,\m^{i}_j;\lambda)L(v^{j},\m^{j};\lambda)=L(v^{j}_i,\m^{j}_i;\lambda)L(v^{i},\m^{i};\lambda)
\end{equation}
that are supposed to be valid for every value of $\lambda$ are equivalent to the following system of difference equations defined on the edges of the $\mathbb{Z}^N$ graph
\begin{subequations}\label{def_eqq}
%\begin{empheq}[left=\empheqlbrace]{align}
\begin{gather} \label{eqq:2:1}
 v^{i}_j-v^{j}=v^{j}_i-v^{i},\\ \label{eqq:2:2}
\m^{i}_j+\m^{j}=\m^{j}_i+\m^{i},\\ \label{eqq:2:3}
\m^{i}_j\m^{j}=\m^{j}_i\m^{i},\\ \label{eqq:2:4}
\m^{i}_jv^{j}+v^{i}_j\m^{j}=\m^{j}_iv^{i}+v^{j}_i\m^{i},
\end{gather}
%&\end{empheq}
\end{subequations}
where $i\neq j \in \{1,\ldots, N\};$ 
\item they corresponds to the following  systems of difference equations  defined on the vertices ({\em vertex systems}) of the $\mathbb{Z}^N$ graph 
\begin{subequations}\label{top}
\begin{gather}
\psi_i-\psi=\phi_i\phi^{-1},\\
%\psi_j-\psi=\phi_j\phi^{-1},\\
\sigma_i-\sigma=\phi_i^{-1}\phi,\\
\chi_i+\chi=\phi_i(\omega_i-\omega)\phi^{-1}.
%\chi_j+\chi=\phi_j(\omega_j-\omega)\phi^{-1}.
\end{gather}
\end{subequations}
\item  The  relations (\ref{top}), serve as hetero-B\"acklund transformations between the following three vertex systems
\begin{subequations}\label{vs2}
%\begin{empheq}[left=\empheqlbrace]{align}\label{vs22}
\begin{gather}\label{vs22}
\phi_{ij}\left(\phi_j^{-1}-\phi_i^{-1}\right)=\left(\phi_i-\phi_j\right)\phi^{-1},\\ \label{vs21}
\phi_{ij}\phi_j^{-1}(\chi_j+\chi)+(\chi_{ij}+\chi_j)\phi_j\phi^{-1}=\phi_{ij}\phi_1^{-1}(\chi_i+\chi)+(\chi_{ij}+\chi_i)\phi_i\phi^{-1},
\end{gather}
%\end{empheq}
\end{subequations}
and
\begin{subequations}\label{vs3}
%\begin{empheq}[left=\empheqlbrace]{align}
\begin{gather}\label{vs32}
\phi_{ij}\left(\phi_j^{-1}-\phi_i^{-1}\right)=\left(\phi_i-\phi_j\right)\phi^{-1},\\ \label{vs31}
\phi_{ij}(\omega_{ij}-\omega_j)\phi_j^{-1}+\phi_{i}(\omega_{i}-\omega)\phi^{-1}=\phi_{ij}(\omega_{ij}-\omega_i)\phi_i^{-1}+\phi_{j}(\omega_{j}-\omega)\phi^{-1},
\end{gather}
%\end{empheq}
\end{subequations}
and
\begin{subequations}\label{vs1}
%\begin{empheq}[left=\empheqlbrace]{align}
\begin{gather}\label{vs12}
(\psi_{ij}-\psi_j)(\psi_j-\psi)=(\psi_{ij}-\psi_i)(\psi_i-\psi),\\ \label{vs11}
(\psi_{ij}-\psi_j)(\chi_j+\chi)+(\chi_{ij}+\chi_j)(\psi_j-\psi)=(\psi_{ij}-\psi_i)(\chi_i+\chi)+(\chi_{ij}+\chi_i)(\psi_i-\psi).
\end{gather}
%\end{empheq}
\end{subequations}
\item 
The difference systems (\ref{def_eqq}) are linear  with respect to the variables $v^{i}_j, v^{j}_i, \m^{i}_j, \m^{j}_i$  and  provided that  the expressions $\m^{i}-\m^{j}$ are  invertible,
they can be solved with respect to these variables to obtain
\begin{subequations} \label{subeq_0}
%\begin{empheq}[left=\empheqlbrace]{align}\label{CM_MAPS_11}
\begin{gather}\label{CM_MAPS_11_n}
\m^{i}_j =(\m^{i}-\m^{j})\m^{i} (\m^{i}-\m^{j})^{-1},\\ \label{CM_MAPS_12_n}
v^{i}_{j}=(\m^{i}_j v^{j}-\m^{j}_i v^{i}-\left(v^{i}-v^{j}\right)\m^{i})(\m^{i}-\m^{j})^{-1}.
\end{gather}%\end{empheq}
\end{subequations}
The system of equations in edge variables (\ref{def_eqq}) %(\ref{subeq_0}) 
are multidimensionally  compatible.
\item The  systems of equations in vertex variables  (\ref{top}) are multidimensional compatible.
\end{enumerate}

\end{theorem}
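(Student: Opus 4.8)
The plan is to prove Theorem \ref{theo1} as a sequence of equivalences and implications, most of which reduce to direct, if careful, algebraic manipulation in the non-commutative algebra $\mathcal{A}$. I would organize the proof according to its five enumerated parts, establishing them roughly in the stated order, since later parts rely on earlier ones.

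\textbf{Part (1): the Lax compatibility.} First I would expand the matrix product (\ref{linearp}) entry by entry, keeping in mind that the entries do not commute and that $\lambda$ lies in the center. The $(1,1)$ and $(2,2)$ entries produce $(\m^{i}_j+\lambda)(\m^{j}+\lambda)=(\m^{j}_i+\lambda)(\m^{i}+\lambda)$ and the analogous relation with $-\lambda$; demanding these hold for \emph{every} central $\lambda$ and separating powers of $\lambda$ yields the linear relation (\ref{eqq:2:2}) from the $\lambda^1$-terms and the quadratic relation (\ref{eqq:2:3}) from the $\lambda^0$-terms (the two sign choices giving the same pair of equations). The $(1,2)$ entry similarly splits by powers of $\lambda$ into (\ref{eqq:2:1}) and (\ref{eqq:2:4}); the $(2,1)$ entry is identically zero. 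This is exactly the component computation already carried out for (\ref{cc_ci}) in the introduction, transcribed into the shifted edge notation, so I would reference that and keep the verification brief.

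\textbf{Parts (2) and (3): the vertex systems and the B\"acklund interpretation.} Here the key idea is to introduce vertex potentials. Equation (\ref{eqq:2:2}) is a discrete curl-free (closedness) condition on $\m$, so I would set $\m^{i}=\phi_i\phi^{-1}$ type potentials; more precisely, the plan is to posit functions $\phi,\psi,\sigma,\chi,\omega:\mathbb{Z}^N\to\mathcal{A}$ via (\ref{top}) and verify that these definitions are consistent (single-valued) precisely because the edge equations (\ref{def_eqq}) hold. Concretely, one checks that $\psi_i-\psi=\phi_i\phi^{-1}$ integrates $\m$, that $\sigma_i-\sigma=\phi_i^{-1}\phi$ integrates the ``dual'' quantity forced by (\ref{eqq:2:3}), and that the $\chi$–$\omega$ relation integrates the $v$-equations. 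Eliminating the auxiliary potentials then gives the closed equations: eliminating $\psi$ from the consistency of $\phi_i\phi^{-1}$ around a quad yields the discrete chiral-field equation (\ref{vvv01})=(\ref{vs22})=(\ref{vs32}), and eliminating the remaining potentials produces the linear coupled equations (\ref{vs21}), (\ref{vs31}), (\ref{vs11}). That the relations (\ref{top}) act as hetero-B\"acklund transformations amounts to observing that each of the three vertex systems is the compatibility (quad-closedness) condition obtained after eliminating a different subset of potentials, with (\ref{top}) providing the explicit transformation between solutions; I would verify one such elimination in detail and indicate that the others are entirely analogous.

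\textbf{Parts (4) and (5): solvability and multidimensional compatibility.} For part (4), since (\ref{def_eqq}) is linear in the four primed/shifted unknowns $v^{i}_j,v^{j}_i,\m^{i}_j,\m^{j}_i$, I would solve it directly: (\ref{eqq:2:2}) and (\ref{eqq:2:3}) form a linear system for $\m^{i}_j,\m^{j}_i$ whose solution, under invertibility of $\m^{i}-\m^{j}$, is (\ref{CM_MAPS_11_n}) (this is exactly the conjugation formula (\ref{CM_MAPS_12}) with $K^{i,j}=\m^{i}-\m^{j}$), and then (\ref{eqq:2:1}), (\ref{eqq:2:4}) determine $v^{i}_j,v^{j}_i$ linearly, giving (\ref{CM_MAPS_12_n}). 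The multidimensional compatibility of the edge system (\ref{def_eqq}) is then \emph{identical} to the multidimensional compatibility of the maps (\ref{CM_MAPS_0}), which is precisely the content of Theorem \ref{theo1m} (proved in Appendix \ref{app00}); so I would simply invoke it. Finally, part (5)—multidimensional compatibility of the vertex systems (\ref{top})—follows because the vertex potentials are defined by integrating the edge quantities, and the integration is path-independent on $\mathbb{Z}^N$ exactly when the edge system is multidimensionally compatible; thus (\ref{mult-com}) for the vertex variables is inherited from the already-established compatibility in part (4).

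\textbf{Main obstacle.} I expect the delicate point to be part (3), specifically checking that the potentials in (\ref{top}) are globally well-defined and that the eliminations producing the three distinct vertex systems are mutually consistent as B\"acklund transformations. In the non-commutative setting one must be scrupulous about the order of factors when closing a quad—e.g. verifying that the two ways of computing $\phi_{ij}$ agree forces exactly (\ref{vvv01}), and any misplaced factor order would break the argument. Keeping track of left- versus right-multiplication throughout the potential construction, and confirming that the same $\chi,\omega$ satisfy all the required relations simultaneously, is where the real care is needed; the remaining parts are either direct linear algebra or direct appeals to Theorem \ref{theo1m}.
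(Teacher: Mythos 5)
Your proposal is correct and follows essentially the same route as the paper: expanding the Lax compatibility and separating powers of $\lambda$ for part (1), potentialization of the edge equations for part (2), elimination of potential subsets around a quad for part (3), direct linear solution plus an appeal to Theorem \ref{theo1m} for part (4), and inheriting multidimensional compatibility of the vertex potentials from the edge system for part (5). The only cosmetic difference is that the paper phrases part (5) via the explicit chain $\m^i_{jk}=\m^i_{kj}\Rightarrow\phi_{ijk}\phi_{jk}^{-1}=\phi_{ikj}\phi_{kj}^{-1}\Rightarrow\phi_{ijk}=\phi_{ikj}$, which is the precise form of your path-independence remark.
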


\begin{proof}
\begin{enumerate}
\item The proof follows by direct computation. Indeed, the compatibility conditions (\ref{linearp}) of the  Lax system (\ref{cc:lp}) explicitly read
    \begin{subequations} \label{cc_c}
    \begin{align} \label{cc_c1}
    (\m^i_j+\lambda)(\m^j+\lambda)=&(\m^j_i+\lambda)(\m^i+\lambda),\\ \label{cc_c2}
    (\m^i_j-\lambda)(\m^j-\lambda)=&(\m^j_i-\lambda)(\m^i-\lambda),\\ \label{cc_c3}
    (\m^i_j+\lambda)v^j+v^i_j(\m^j-\lambda)=&(\m^j_i+\lambda)v^i+v^j_i(\m^i-\lambda).
    \end{align}
    \end{subequations}
Demanding that  (\ref{cc_c})  hold for every $\lambda,$  (\ref{cc_c1}) is equivalent to (\ref{cc_c2}) and together with (\ref{cc_c3}) we obtain exactly
the system of differences equations (\ref{def_eqq}).
\item 
The procedure to obtain a system of vertex equations associated with its edge system counterpart,   
is nowadays referred to as {\em potentialization}
 \cite{Kassotakis_2011,Kassotakis_2012,Kouloukas:2012,Doliwa_2013,Fordy_2017,Kassotakis_2021,Kass2}.
Namely, equations (\ref{eqq:2:1}),  (\ref{eqq:2:2}) and (\ref{eqq:2:3})  guarantee  the existence of the potential functions   $\chi,$  $\psi,$  $\phi,$ respectively
\begin{subequations}\label{xvw}
\begin{align}
\label{x}
  &v^{i}=\chi_i+\chi,& v^{j}=\chi_j+\chi,& \\ 
  \label{v}
  &\m^{i}=\psi_i-\psi,& \m^{j}=\psi_j-\psi,& \\ 
  \label{phi}
&\m^{i}=\phi_i\phi^{-1},& \m^{j}=\phi_j\phi^{-1},& 
  \end{align}
\end{subequations}
Then equation (\ref{eqq:2:4}) can be written in the conservation form
\begin{gather}\label{omega}
\left(\phi_i^{-1}v^i\phi\right)_j-\phi_i^{-1}v^i\phi=\left(\phi_j^{-1}v^j\phi\right)_i-\phi_j^{-1}v^j\phi.
\end{gather}
Moreover, (\ref{omega})  guarantees the existence of the potential function
$\omega$ which is defined by
\begin{gather}
\label{w}
  \phi_i^{-1}v^i\phi=\omega_i-\omega, \qquad \phi_j^{-1}v^j\phi=\omega_j-\omega.
\end{gather}
Furthermore, taking $$\rho:=\phi^{-1},$$  due to (\ref{phi}) equation (\ref{eqq:2:2}) becomes the conservation relation
\begin{align}\label{sigma_0}
\left(\rho_i\rho^{-1}\right)_j-\rho_i\rho^{-1}=\left(\rho_j\rho^{-1}\right)_i-\rho_j\rho^{-1},
\end{align}
that guarantees the existence of the following potential $\sigma$
\begin{align}\label{sigma}
&\sigma^i-\sigma=\rho_i\rho^{-1}, %=\phi_i^{-1}\phi,
& \sigma^j-\sigma=\rho_j\rho^{-1}.& %=\phi_j^{-1}\phi
\end{align}
From   (\ref{xvw}) and (\ref{omega}), we eliminate $v^{i}, v^{j}$, $\m^{i}, \m^{j}$ and together with (\ref{sigma}) we obtain  (\ref{top}).
\item
From the difference systems in vertex variables (\ref{top}), we use the compatibility conditions $\phi_{ij}=\phi_{ji},$ $\chi_{ij}=\chi_{ji},$ $\psi_{ij}=\psi_{ji},$ $\omega_{ij}=\omega_{ji},$ to either eliminate $\psi$ and $\omega$, obtaining (\ref{vs2}); eliminate $\psi$ and $\chi$, obtaining (\ref{vs3}); or eliminate $\phi$ and $\omega$, obtaining (\ref{vs1}).

\item The proof of this fact is  essentially the same as the proof of Theorem \ref{theo1m}.

\item
In order to prove that the vertex systems  (\ref{top}) are multidimensional compatible we have to show that 
$\forall i\neq j \neq k \in\{1,\ldots, N\},$ 
$\phi_{ijk}=\phi_{ikj},$ $\chi_{ijk}=\chi_{ikj},$ $\psi_{ijk}=\psi_{ikj}$ and $\omega_{ijk}=\omega_{ikj}$. Let us first prove that  $\phi_{ijk}=\phi_{ikj}.$ From item $(4)$ of this theorem we have  that $\m^i_{jk}=\m^i_{kj}.$ Substituting the definition of the potential $\phi$ (\ref{phi}) to the previous relations we obtain $\phi_{ijk}\phi_{jk}^{-1}=\phi_{ikj}\phi_{kj}^{-1}.$ So in order $\phi_{ijk}=\phi_{ikj}$ to hold, it should be that $\phi_{jk}=\phi_{kj},$ but this trivially holds since the potential $\phi$ exists. Similarly we prove that the remaining potentials are multidimensional consistent.

\end{enumerate}
\end{proof}

Note that,   equations (\ref{eqq:2:2}),(\ref{eqq:2:3}) also appear
in the context of discrete differential geometry and  serve as the defining relations of the so-called {\em skew parallelogram nets} \cite{Schief_2007,Bobenko:2008},  c.f. \cite{Hoffmann:2024}. In that respect (\ref{def_eqq}) could be realized as  deformations of skew parallelogram nets since for $v^i=v^i_j=v^j=v^j_i=c,$ where the constant $c$ is an element of the field that the algebra is defined over, we recover (\ref{eqq:2:2}),(\ref{eqq:2:3}).
In \cite{Doliwa_2013,Doliwa_2014,Noumi:2020,Kassotakis:1:2021,Kassotakis:2022b}, reductions of (\ref{eqq:2:2}),(\ref{eqq:2:3}) on certain subspaces of  a $\mathbb{Z}_n-$graded algebra  over a non-commutative ring were considered together with the associated non-abelian Yang-Baxter maps and the corresponding difference systems in edge and in vertex variables.

\subsection{Abelian elastic collisions of point-mass particles as difference  equations on the $\mathbb{Z}^N$ graph}
%%%%%%%%%%%%%%%%%%%%%%%%%%%%%%%%%%%%%%%%%%%%%%%%%%%%%%%%%%%%%%%%%%%%%%%%%%%%
Under the identifications (\ref{notation1}), (\ref{notation1.1}), the defining relations of the abelian map $Q_{ij}$ (\ref{CM_EC}), define the following difference systems in edge variables
\begin{align} \label{diff_sys_1}
 v^{i}_j-v^{j}=&v^{j}_i-v^{i},& m^{i}\left(v^{i}_j-v^{i}\right)=&m^{j}\left(v^{j}_i-v^{j}\right),& i\neq j\in\{1,\ldots, N\}.
 \end{align}
 The first equation of (\ref{diff_sys_1}), guarantees the existence  of a potential function $\chi,$ such that
 \begin{align*}
 v^i=&\chi_i+\chi,& v^j=&\chi_j+\chi.
 \end{align*}
 In terms of the potential function $\chi$ the second  equation of (\ref{diff_sys_1}), reads
 \begin{align}\label{Vert_sys_1}
\left(m^{i}-m^{j}\right)(\chi_{ij}-\chi)-\left(m^{i}+m^{j}\right)(\chi_i-\chi_j)=0.
\end{align}
On the other hand, The second equation of (\ref{diff_sys_1}), guarantees the existence  of a potential function $\psi,$ such that
 \begin{align*}
 v^i=&\frac{\psi_i-\psi}{m^i},& v^j=&\frac{\psi_j-\psi}{m^j}.
 \end{align*}
 Then, in terms of the potential function $\psi$ the first  equation of (\ref{diff_sys_1}), reads
 \begin{align}\label{Vert_sys_22}
\left(\m^{i}-\m^{j}\right)(\psi_{ij}-\psi)+\left(\m^{i}+\m^{j}\right)(\psi_i-\psi_j)=0,
\end{align}
where $\m^i:=1/m^i$ and $\m^j:=1/m^j.$
The linear difference systems in vertex variables (\ref{Vert_sys_1}) and (\ref{Vert_sys_22}), are related by the substitution
\begin{align*}
\chi_i+\chi=&\frac{\psi_i-\psi}{m^i},& \chi_j+\chi=&\frac{\psi_j-\psi}{m^j}.
\end{align*}

\section{The closure relations and systems of three-dimensional vertex equations} \label{sec4n}

In Lemma \ref{1st_lemmam} we have provided two algebraic relations which are satisfied by the  multidimensional compatible maps (\ref{CM_MAPS_0}) on any cubic-cell of the $Z^N-$graph. Clearly, these algebraic relations also hold for  (\ref{CM_MAPS_11_n}),(\ref{CM_MAPS_12_n}) that serve as the difference systems in edge variables associated with the multidimensional compatible maps (\ref{CM_MAPS_0}). These relations serve as closure relations (c.f. \cite{Lobb:2009}) since they hold on any cubic-cell of the $N-$cube lattice where  the non-abelian systems (\ref{CM_MAPS_11_n}),(\ref{CM_MAPS_12_n}) are defined. 

\subsection{Three-dimensional systems of vertex equations}
In the following Proposition we provide all closure relations associated with the  difference systems (\ref{CM_MAPS_11_n}),  (\ref{CM_MAPS_12_n}), as well as the corresponding systems of three-dimensional vertex equations. As a result we obtain coupled systems of three-dimensional vertex equations.

\begin{prop}
$A.$ On any cubic-cell of the $N-$cube lattice, the solutions of the difference systems (\ref{subeq_0}) %(\ref{CM_MAPS_12}),(\ref{CM_MAPS_11})
 satisfy the following closure relations
\begin{enumerate}[label=(\roman*)]
\item $K^{i,j}_k+K^{j,k}_i+K^{k,i}_j=0,$ where $K^{i,j}:=\m^i-\m^j$;
\item $S^{i,j}_k S^{j,k}_i S^{k,i}_j=1,$ where $S^{i,j}:=\m^i\left(\m^j\right)^{-1}$;
\item $T^{i,j}_k+T^{j,k}_i+T^{k,i}_j=0,$ where $T^{i,j}:=u^i-u^j$;
\item $U^{i,j}_k+U^{j,k}_i+U^{k,i}_j=0,$ where $U^{i,j}:=\phi_i^{-1}u^i\phi-\phi_j^{-1}u^j\phi,$ and $\phi$ the potential function defined in (\ref{phi}).
    \end{enumerate}
    \noindent B. The solutions of the difference systems in vertex variables (\ref{vs2}),(\ref{vs3}),(\ref{vs1}) satisfy the following systems of three-dimensional vertex equations
    \begin{flalign*}
    \begin{aligned}
     \phi_{ij}\phi_{ik}^{-1}\phi_{jk}=\phi_{jk}\phi_{ik}^{-1}\phi_{ij},\\
     \left(\mbox{in terms of $\psi$ reads}:  
   (\psi_{ik}-\psi_{k})(\psi_{jk}-\psi_{k})^{-1}(\psi_{ij}-\psi_{i})(\psi_{ik}-\psi_{i})^{-1}(\psi_{jk}-\psi_{j})(\psi_{ij}-\psi_{j})^{-1}=1\right),\\
    \left(\phi_{ik}^{-1}\left(\chi_{ik}+\chi_k\right)-\phi_{jk}^{-1}\left(\chi_{jk}+\chi_k\right)\right)\phi_k+
    \left(\phi_{ij}^{-1}\left(\chi_{ij}+\chi_i\right)-\phi_{ik}^{-1}\left(\chi_{ik}+\chi_i\right)\right)\phi_i\\
    +\left(\phi_{jk}^{-1}\left(\chi_{jk}+\chi_j\right)-\phi_{ij}^{-1}\left(\chi_{ij}+\chi_j\right)\right)\phi_j=0,
    \end{aligned}&& (\mathcal{X}_2)
    \end{flalign*}
\begin{flalign*}
    \begin{aligned}
    \left(\phi_{ik}-\phi_{jk}\right)\phi_{k}^{-1}+\left(\phi_{ij}-\phi_{ik}\right)\phi_{i}^{-1}+\left(\phi_{jk}-\phi_{ij}\right)\phi_{j}^{-1}=0,\\
    \left(\phi_{ik}^{-1}\left(\chi_{ik}+\chi_k\right)-\phi_{jk}^{-1}\left(\chi_{jk}+\chi_k\right)\right)\phi_k+
    \left(\phi_{ij}^{-1}\left(\chi_{ij}+\chi_i\right)-\phi_{ik}^{-1}\left(\chi_{ik}+\chi_i\right)\right)\phi_i\\
    +\left(\phi_{jk}^{-1}\left(\chi_{jk}+\chi_j\right)-\phi_{ij}^{-1}\left(\chi_{ij}+\chi_j\right)\right)\phi_j=0,
    \end{aligned}&& (\mathcal{X}_4)
    \end{flalign*}
    \begin{flalign*}
    \begin{aligned}
    \left(\phi_{ik}-\phi_{jk}\right)\phi_{k}^{-1}+\left(\phi_{ij}-\phi_{ik}\right)\phi_{i}^{-1}+\left(\phi_{jk}-\phi_{ij}\right)\phi_{j}^{-1}=0,\\
    \left(\phi_{ik}\left(\omega_{ik}-\omega_k\right)-\phi_{jk}\left(\omega_{jk}-\omega_k\right)\right)\phi_k^{-1}+
    \left(\phi_{ij}\left(\omega_{ij}-\omega_i\right)-\phi_{ik}\left(\omega_{ik}-\omega_i\right)\right)\phi_i^{-1}\\
    +\left(\phi_{jk}\left(\omega_{jk}-\omega_j\right)-\phi_{ij}\left(\omega_{ij}-\omega_j\right)\right)\phi_j^{-1}=0.
    \end{aligned}&& (\mathcal{X}^\dag_4)
    \end{flalign*}
\end{prop}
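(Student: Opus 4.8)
The plan is to prove the statement in two parts, mirroring the structure of the claim. For Part $A$, the closure relations, I would first recall that by the reinterpretation in Section \ref{re}, the difference systems (\ref{subeq_0}) are exactly the edge-variable form of the multidimensionally compatible maps (\ref{CM_MAPS_0}). Relations (i) and (ii) are then immediate: they are precisely the additive and multiplicative closure relations established in items (1) and (2) of Lemma \ref{1st_lemmam}, which hold on any cubic cell by construction. Relation (iii) for $T^{i,j}:=u^i-u^j$ follows directly from the edge equation (\ref{eqq:2:1}): writing out $T^{i,j}_k+T^{j,k}_i+T^{k,i}_j$ and using $u^i_j-u^j=u^j_i-u^i$ cyclically, the terms telescope to zero. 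For relation (iv), I would use the potential $\omega$ defined in (\ref{w}), namely $\phi_i^{-1}v^i\phi=\omega_i-\omega$; since $U^{i,j}=\phi_i^{-1}u^i\phi-\phi_j^{-1}u^j\phi=(\omega_i-\omega)-(\omega_j-\omega)=\omega_i-\omega_j$, the sum $U^{i,j}_k+U^{j,k}_i+U^{k,i}_j$ again telescopes by the same cyclic-cancellation argument as (iii), using $\omega_{ij}=\omega_{ji}$.

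For Part $B$, I would derive each three-dimensional vertex system from the corresponding closure relation together with the vertex systems (\ref{vs2}), (\ref{vs3}), (\ref{vs1}). The first equation in system $(\mathcal{X}_2)$, i.e. $\phi_{ij}\phi_{ik}^{-1}\phi_{jk}=\phi_{jk}\phi_{ik}^{-1}\phi_{ij}$, should come from the multiplicative closure relation (ii): substituting $\m^i=\phi_i\phi^{-1}$ (and its shifts) into $S^{i,j}_k S^{j,k}_i S^{k,i}_j=1$ and simplifying yields this form; its $\psi$-reformulation follows by instead using $\m^i=\psi_i-\psi$ in the same relation. For systems $(\mathcal{X}_4)$ and $(\mathcal{X}^\dag_4)$, the first equation $\sum_{\text{cyc}}(\phi_{ik}-\phi_{jk})\phi_k^{-1}=0$ is the image of the additive closure relation (i) under $\m^i=\phi_i\phi^{-1}$, rewritten at the shifted level. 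The second (linear, inhomogeneous) equation in each of $(\mathcal{X}_2)$, $(\mathcal{X}_4)$, $(\mathcal{X}^\dag_4)$ should be the closure relation (iv) rewritten in the potentials $\chi$ and $\omega$ respectively, using $v^i=\chi_i+\chi$ from (\ref{x}) and the definition (\ref{w}) of $\omega$.

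Concretely, I would proceed by taking the already-proven closure relations of Part $A$, which live on a three-dimensional cube, and substituting the potential definitions (\ref{phi}), (\ref{v}), (\ref{x}), (\ref{w}) evaluated at the various shifted vertices, then collecting terms. The key technical point is bookkeeping of the forward shifts: each closure relation must be read off at the appropriate corner of the cube so that after substituting $\m^i_k=\phi_{ik}\phi_k^{-1}$ and the analogous shifted expressions, the telescoping structure of the potentials produces exactly the displayed cyclic sums over the three faces $ij$, $jk$, $ki$.

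I expect the main obstacle to be precisely this shift bookkeeping in Part $B$: the closure relations carry subscripts like $K^{i,j}_k$, and when one substitutes $\m^i=\phi_i\phi^{-1}$ the shift $k$ acts on both $\phi_i$ and $\phi$, so one must track expressions such as $\phi_{ik}\phi_k^{-1}$ versus $\phi_{ik}\phi_{jk}^{-1}$ carefully in the non-commutative setting, where the order of factors cannot be rearranged. Verifying that relation (iv), after the $\chi$- and $\omega$-substitutions, reproduces the specific grouping of the inhomogeneous equations (with $\chi_{ik}+\chi_k$ and $\phi_k$ appearing in the stated positions) is where most of the genuine computation lies; everything else reduces to the telescoping identities already exploited in Lemma \ref{1st_lemmam}.
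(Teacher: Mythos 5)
Your strategy is essentially the paper's own: Part $A$ items $(i)$ and $(ii)$ are quoted from Lemma \ref{1st_lemmam}, item $(iii)$ is the same cyclic telescoping via (\ref{eq:2:1}), and Part $B$ is obtained by substituting the potential definitions into the closure relations. Your treatment of item $(iv)$ differs only cosmetically from the paper's: the paper groups the cyclic sum by the prefactors $\phi_{ij}^{-1}$, $\phi_{jk}^{-1}$, $\phi_{ik}^{-1}$ and applies the conservation form (\ref{omega}) on each face of the cube, whereas you pass to the potential $\omega$ of (\ref{w}) and telescope using $\omega_{ij}=\omega_{ji}$; these are equivalent, since (\ref{omega}) is precisely the integrability condition that grants the existence of $\omega$.

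There is, however, one concrete misstep in your Part $B$ roadmap. You assert that the second (inhomogeneous) equation in \emph{each} of $(\mathcal{X}_2)$, $(\mathcal{X}_4)$, $(\mathcal{X}^\dag_4)$ is closure relation $(iv)$ rewritten in the appropriate potentials. This is correct for $(\mathcal{X}_2)$ and $(\mathcal{X}_4)$: substituting $v^i=\chi_i+\chi$ gives $U^{i,j}=\bigl(\phi_i^{-1}(\chi_i+\chi)-\phi_j^{-1}(\chi_j+\chi)\bigr)\phi$, and the cyclic sum of its shifts is exactly the displayed equation. But it fails for $(\mathcal{X}^\dag_4)$: by your own observation in Part $A$, substituting the potential $\omega$ into $(iv)$ yields $U^{i,j}=\omega_i-\omega_j$, so the cyclic sum collapses to the trivial identity $0=0$ and produces no equation at all. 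The second equation of $(\mathcal{X}^\dag_4)$ comes instead from closure relation $(iii)$: writing $v^i=\phi_i(\omega_i-\omega)\phi^{-1}$ from (\ref{w}), one gets
\begin{equation*}
T^{i,j}_k=\bigl(\phi_{ik}(\omega_{ik}-\omega_k)-\phi_{jk}(\omega_{jk}-\omega_k)\bigr)\phi_k^{-1},
\end{equation*}
and the cyclic sum $T^{i,j}_k+T^{j,k}_i+T^{k,i}_j=0$ is exactly the displayed equation. This is a local repair; the rest of your plan (in particular deriving the first equations of the three systems from $(ii)$ and $(i)$ under $\mu^i=\phi_i\phi^{-1}$, respectively $\mu^i=\psi_i-\psi$) is correct as stated.
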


\begin{proof}
The proof of the first two items $(i)$ and $(ii)$ of part $A.$ of the Proposition have been essentially proved in Lemma \ref{1st_lemmam}.

Let us prove item $(iii)$.  Since $T^{i,j}:=v^{i}-v^{j},$ we have
\begin{align*}
T^{i,j}_k+T^{j,k}_i+T^{k,i}_j=\underbrace{v^{i}_k-v^{k}_i}_{=u^k-u^i}+\underbrace{v^{j}_i-v^{i}_j}_{=u^i-u^j}+\underbrace{v^{k}_j-v^{j}_k}_{=u^j-u^k}=0,
\end{align*}
where we have substituted (\ref{eq:2:1}).\\
\noindent $(iv)$ Since $U^{i,j}:=\phi_i^{-1}u^i\phi-\phi_j^{-1}u^j\phi,$ we have
\begin{align*}
U^{i,j}_k+U^{j,k}_i+U^{k,i}_j=\underbrace{\phi_{ij}^{-1}\left(u^j_i\phi_i-u^i_j\phi_j\right)}_{=\left(\phi_j^{-1}u^j-\phi_i^{-1}u^i\right)\phi}
+\underbrace{\phi_{jk}^{-1}\left(u^k_j\phi_j-u^j_k\phi_k\right)}_{=\left(\phi_k^{-1}u^k-\phi_j^{-1}u^j\right)\phi}
+\underbrace{\phi_{ik}^{-1}\left(u^i_k\phi_k-u^k_i\phi_i\right)}_{=\left(\phi_i^{-1}u^i-\phi_k^{-1}u^k\right)\phi}=0,
\end{align*}
where we have substituted (\ref{omega}).\\
\noindent $B.$ Rewriting the closure relations $(i)-(iv)$ in terms of the potentials  (\ref{x}),(\ref{v}), they become exactly the three-dimensional vertex system $(\mathcal{X}_2).$ On the other hand, rewriting them in terms of the potentials (\ref{x}),(\ref{phi}), they coincide with $(\mathcal{X}_4)$. Finally, expressing the closure relations in terms of potentials (\ref{phi}),(\ref{omega}), we obtain $(\mathcal{X}^\dag_4).$
\end{proof}
  We  recall   that the first equation of $(\mathcal{X}_2),$ is related to a non-abelian three-dimensional equation that was introduced in \cite{Nijhoff:1990}, that is 
 \begin{equation}\label{frankx2}
 (\psi_{ik}-\psi_{k})(\psi_{jk}-\psi_{k})^{-1}(\psi_{ij}-\psi_{i})(\psi_{ik}-\psi_{i})^{-1}(\psi_{jk}-\psi_{j})(\psi_{ij}-\psi_{j})^{-1}=1
 \end{equation}   
 via  the B\"acklund transformation
 $\psi_i-\psi=\phi_i\phi^{-1}$. The abelian version of (\ref{frankx2})  coincides with the equation referred to as $(\chi_2)$ in \cite{ABS3d}. Also, the first equation of the three-dimensional vertex systems $(\mathcal{X}_4)$ and $(\mathcal{X}^\dag_4),$ first appeared in \cite{Nijhoff:1990} and in the Abelian limit  coincides with the equation referred to as $(\chi_4)$ in \cite{ABS3d}.

\section{A unified approach on discrete analytic functions} \label{secf}

In this Section we show that both the linear and the nonlinear theories of discrete analytic functions can be considered as special cases of (\ref{ido_i}). We start with the following definition.

\begin{definition}[Discrete analytic functions on an embedding of the ${\mathbb Z}^2$ graph \cite{Mercat77,BiaKasNie}]\label{deeefff}
Let $\sigma: {\mathbb Z}^2 \to {\mathbb C}$ be an injective embedding of the ${\mathbb Z}^2$ graph to the complex plane ${\mathbb C}$ and let the function
$F: {\mathbb C} \to {\mathbb C}$. Then the composition $\chi = F \circ \sigma$ is another embedding of the ${\mathbb Z}^2$ graph to the complex plane ${\mathbb C}$.
We say that the function $F$ is {\em discrete analytic on the embedding $\sigma$ at an elementary quad that consists of the vertices  $\{(m,n), (m+1,n), (m,n+1), (m+1,n+1)\}$ of the graph}, iff the equation
\begin{equation} \label{dbar}
(\sigma_{i}-\sigma_{j})(\chi_{ij}-\chi) =(\sigma_{ij}-\sigma)(\chi_{i}-\chi_{j}),
\end{equation}
holds for the quad. 
We say that the function $F$ is {\em discrete analytic on the embedding $\sigma$}, iff the equation (\ref{dbar})
holds for every elementary quad of the graph.
\end{definition}

\begin{remark}
Since the definition is "local" i.e. it is restricted to elementary quads, it can be extended to any planar quad graph.
\end{remark}
\begin{remark}
Equation (\ref{dbar})  is a discrete analogue of $\bar{\partial} f(z)=0,$ that 
 in discrete integrable systems theory is referred to as
 {\em the discrete Moutard equation} \cite{NimSch,DoGriNieSa}.
\end{remark}

In the literature, {\em discrete analytic functions} usually are defined with a prescribed embedding $\sigma$.
In Ferrand's definition \cite{Ferrand_1944,Duffin1,DuffinDuris,Hayabara,Deeter,ZeilFur,ZeilNew,AlpayJSV} the   embedding $\sigma$ is a square embedding $\left\{ m+in \in {\mathbb C}\, | \, (m,n)\in {\mathbb Z} ^2 \right\}$.
 Duffin's definition \cite{Duffin2, Mercat1} assumes that the embedding consists of rhombi only.
In article \cite{Mercat2} Mercat deals with rectangle embeding $\sigma$.
In this article we assume parallelogram embedding, 
i.e. for every quad of the graph the following constraint holds
\begin{equation} \label{parall}
\sigma_{ij}+\sigma-\sigma_{i}-\sigma_{j}=0.
\end{equation}
However, we would like to stress that  
 in Mercat approach \cite{Mercat77}, where essentialy discrete analytic transformations are
 the transformations that preserves the so called {\em discrete conformal structure}, the role of the 
 embeding $\sigma$ is secondary.

The important observation is that equation  (\ref{dbar})  can be rewritten as
\begin{equation} \label{cit}
(\sigma_{ij}-\sigma_j)  ( \chi_{ij}+ \chi_j)  + (\sigma_j-\sigma) ( \chi_j+ \chi)+(\sigma - \sigma_i)( \chi_{i}+ \chi)  +
 (\sigma_i-\sigma_{ij}) (\chi_{ij}+ \chi_i) =0.
\end{equation}
If we define the  directed integral over the directed edge  $(a,b)$ of the graph as
\[\int_{(a,b)} \chi(\sigma) d \sigma := \frac{ \chi(b)+ \chi(a)}{2}(\sigma(b)-\sigma(a)),\]
then equation (\ref{dbar}) serves as the
discrete analogue of the Cauchy integral theorem %$\ointclockwise f(z) dz=0$
$\ointctrclockwise  f(z) dz=0$  \cite{Duffin2},
which can be written as
\[\ointctrclockwise _{\Diamond} \chi(\sigma) d \sigma=0.\]
In the formula above $\Diamond$ means that the integration is performed over an elementary quadrilateral of the ${\mathbb Z}^2$ graph.

An important characterisation of the discrete Moutard equation is that it leads to the discrete Laplace equation by the sublattice approach
(see \cite{DoGriNieSa})

\begin{align}
\begin{aligned} \label{Lap} (\sigma_{ij}-\sigma)^{-1} (\sigma_{i}-\sigma_{j})(\chi_{ij}-\chi)+
(\sigma_{-i-j}-\sigma)^{-1}(\sigma_{-i}-\sigma_{-j})(\chi_{-i-j}-\chi)+\\
+(\sigma_{-ij}-\sigma)^{-1}(\sigma_{j}-\sigma_{-i})(\chi_{-ij}-\chi)
+(\sigma_{i-j}-\sigma)^{-1}(\sigma_{-j}-\sigma_{i})(\chi_{i-j}-\chi)=0,\
\end{aligned}
\end{align}
that serves as the discrete analogue of 
$ (\partial _x^2+\partial_y^2) f(x+iy)=0.$

\begin{remark}
The discrete Cauchy integral theorem (\ref{cit}) (or equivalently (\ref{dbar})), is a local property (it is defined on a single quad)
 and  can thus be easily extended to an arbitrary quad-graph. In addition,  the transition from (\ref{dbar}) to the 
 Laplace type equation (\ref{Lap}) 
can be performed in an arbitrary quad-graph.
\end{remark}

Finally, we would like to  stress that the formulas
(\ref{dbar}), (\ref{cit}) and  (\ref{Lap}) are written in a form that is valid in the non-abelian case.

\subsection{Non-abelian unification of the theories of discrete analytic functions.}
 We will show now that both the linear theory (see Definition \ref{deeefff}) and the non-linear theory (see \cite{BoSeSp}) of discrete analytic functions have a unified description. 
 
 The system of difference equations (\ref{top}) under the assumptions:
  \begin{enumerate}
\item
 the values of  functions $\chi$ and $\omega$ commute with the values of any of the functions $\phi$, $\psi$ or $\sigma$;
 \item the functions $\chi$, $\omega$, $\phi$, $\psi$ and $\sigma$  are embeddings  of the ${\mathbb Z}^2$ graph in an associative algebra ${\mathcal A}$ 
i.e. they are considered as  maps ${\mathbb Z}^2 \to {\mathcal A}$;
\item
the values of the functions
$\phi$, $\psi$, $\sigma$ belong to  the domain
and values of functions
$\chi$, $\omega$ belong to the codomain of the maps 
$F_i:{\mathcal A} \to {\mathcal A}$, such that  $\chi =F_1 \circ 
\phi$, $\chi=F_2 \circ \psi$, $\omega=F_3 \circ \sigma$, $\omega=F_4 \circ \phi$, $\omega=F_5 \circ \psi$ and $\omega=F_6 \circ \sigma\footnote{We recall that $\chi =F_1 \circ 
\phi$ means that for every $(n_1,n_2) \in {\mathbb Z}^2$ we have  $\phi(n_1,n_2)\mapsto F_1 \circ \phi(n_1,n_2)=\chi(n_1,n_2) $ and similarly for the remaining functions}$;
\end{enumerate}
 reads
\begin{subequations} \label{analy}
\begin{gather} \label{eqpaa}
 \chi_i+ \chi= \phi_i \phi^{-1}(  \omega_i - \omega),\\ \label{eqpab}
 \chi_i+ \chi= (\psi_i-\psi)(  \omega_i - \omega),\\ \label{eqpac}
 (\sigma_i-\sigma)(\chi_i+ \chi)= (  \omega_i - \omega).
\end{gather}
\end{subequations}

Furthermore, if we assume that if $ \omega_i - \omega \neq 0$ then from equations (\ref{eqpaa}), (\ref{eqpab}) and (\ref{eqpac})
we infer that
\begin{equation} \label{eqpba}
\psi_i-\psi=\phi_i \phi^{-1},
\end{equation}
\begin{equation} \label{eqpbb}
\sigma_i-\sigma=\phi_i^{-1} \phi,
\end{equation}
and the compatibility of (\ref{eqpba}) and (\ref{eqpbb}) leads to the following difference equations with one dependent
variable each:
\begin{equation} \label{sigm}
\phi_{ij} (\phi_j^{-1}-\phi_i^{-1})=(\phi_i-\phi_j) \phi^{-1},
\end{equation}
\begin{equation} \label{siga}
(\psi_{ij}-\psi_j)(\psi_j-\psi)=(\psi_{ij}-\psi_i)(\psi_i-\psi)
\end{equation}
\begin{equation} \label{sigs}
(\sigma_{ij}-\sigma_j)(\sigma_j-\sigma)=(\sigma_{ij}-\sigma_i)(\sigma_i-\sigma)
\end{equation}
In addition  the functions $\mu^i$ and $v^i$
\begin{subequations}\label{eqv}
\begin{equation} \label{eqv1}
\mu^i:=\psi_i-\psi=\phi_i \phi^{-1},
\end{equation}
\begin{equation} \label{eqv2}
v^i:= \chi_i+ \chi= \phi_i \phi^{-1}(  \omega_i - \omega) ,
\end{equation}
\end{subequations}
 are naturally defined on the edges of the ${\mathbb Z}^2$ graph.

Summarizing, both  the linear theory of analytic functions and the non-linear theory  arise as 
special cases of system (\ref{analy}), or more precisely as
special cases of the  structure  $\mathcal{O}(F,\sigma),$ where
\begin{enumerate}[label=(\roman*)]
\item the function $\sigma:{\mathbb Z}^2\rightarrow \mathcal{A}$   is an embedding of the ${\mathbb Z}^2$ graph into an algebra $\mathcal A$;
\item  $F: {\mathcal A} \supset \hbox{Im}\, \sigma \ni x \mapsto y=F(x) \in {\mathbb C},  $  is a function and $\chi:=F \circ \sigma $;
\item the embeddings $\sigma$ and $\chi$ commute i.e. if $x\in \hbox{Im}\, \sigma$ and $y\in \hbox{Im}\, \chi$ then  $x y=y x$;
\item the embeddings satisfy
\begin{equation} \label{dbars}
(\sigma_{i}-\sigma_{j})(\chi_{ij}-\chi) =(\sigma_{ij}-\sigma)(\chi_{i}-\chi_{j}),
\end{equation}
 that is equivalent to
\begin{equation} \label{cits}
(\sigma_{ij}-\sigma_j)  ( \chi_{ij}+ \chi_j)  + (\sigma_j-\sigma) ( \chi_j+ \chi)+(\sigma - \sigma_i)( \chi_{i}+ \chi)  +
 (\sigma_i-\sigma_{ij}) (\chi_{ij}+ \chi_i) =0,
\end{equation}
that implies 
\begin{align}
\begin{aligned} \label{Laps} (\sigma_{ij}-\sigma)^{-1} (\sigma_{i}-\sigma_{j})(\chi_{ij}-\chi)+
(\sigma_{-i-j}-\sigma)^{-1}(\sigma_{-i}-\sigma_{-j})(\chi_{-i-j}-\chi)+\\
+(\sigma_{-ij}-\sigma)^{-1}(\sigma_{j}-\sigma_{-i})(\chi_{-ij}-\chi)
+(\sigma_{i-j}-\sigma)^{-1}(\sigma_{-j}-\sigma_{i})(\chi_{i-j}-\chi)=0;
\end{aligned}
\end{align}
\item  in general no restriction on the embedding $\sigma$ is imposed.
\end{enumerate}

\subsubsection{Linear theory of discrete analytic functions}

If one considers ${\mathcal A}= {\mathbb C}$, the  system of equations (\ref{eqpac}) and (\ref{sigs})  coincides with  the Definition \ref{deeefff} of  discrete analytic functions on parallelogram embedings.
Indeed,  equation (\ref{eqpac}) after elimination $\omega$ gives  (\ref{dbar}), whereas equation (\ref{sigs})
in the abelian case
can be rewritten as
$(\sigma_{ij}-\sigma_i-\sigma_j+\sigma)(\sigma_{i}-\sigma_j)=0,$ which means that
 either the faces of the domain are parallelograms or they locally degenerate to a line.
 
 We plan to dedicate a separate article to investigating further consequences of system  (\ref{analy})
in the theory of discrete analytic functions.

\subsubsection{Non-linear theory of analytic functions}

We show now that the non-linear approach of discrete analytic functions theory (see e.g. \cite{Thurston,Stephenson:2005,Bobe_2006,Hoffmann:2003,BoSeSp,Joshi_2021})  is a special
case of the non-abelian theory. 

We follow the steps we have used in Lemma \ref{lema000}.
We take
\begin{align*}
\mu^i:=\psi_i-\psi=\begin{bmatrix} 0  & a^i \\ b^i & 0 \end{bmatrix}, \qquad v^i:= \begin{bmatrix} \chi_i+ \chi  & 0 \\  & \chi_i+ \chi \end{bmatrix}.
\end{align*}
where the entries of the matrices are functions $\mathbb{Z}^2\to \mathbb{C}$.
Equations (\ref{eqq:2:2}), (\ref{eqq:2:3})  can be rewritten as
\[ a^i_j =\frac{a^i-a^j}{b^i-b^j} b^i, \quad b^i_j =\frac{b^i-b^j}{a^i-a^j} a^i.\]
Multiplying the sides of the equations we get
\[ (a^i b^i)_j =a^i b^i, \]
so the product of  the two variables $(\alpha^i)^2:=a^i b^i$ is a function of the $i$-th independent variable only.
 So we can set 
 \[ a^i= \alpha^i y^i, \quad b^i =\frac{\alpha^i}{y^i}.\]
We conclude that $\mu^i$ should be of the form
\begin{align*}
\mu^i:=\psi_i-\psi=\begin{bmatrix} 0  & \alpha^i y^i \\ \frac{\alpha^i}{y^i} & 0 \end{bmatrix},
\end{align*}
 where $\alpha^i$ are given $\mathbb C$ valued functions which depend only on $i$-th independent variable of the graph,
and $y^i$ 
 are  $\mathbb C$ valued functions that have to be determined.
It turns out that the functions $y^i$ must obey
\begin{subequations}\label{eqn}
\begin{equation} \label{eqn1}
y^i_jy^i =y^j_iy^j,
\end{equation}
\begin{equation} \label{eqn2}
\alpha^i (y^i_j - y^i) =\alpha^j (y^j_i - y^j).
\end{equation}
\end{subequations}
Equation (\ref{eqn1}) guarantees the existence of potential $\w$ such that
\begin{align*}
y^i=\w_i \w,
\end{align*}
while  equation (\ref{eqn2})  becomes the discrete modified KdV equation \cite{NiRaGrOh}
\begin{align*}
\alpha^i (\w_{ij} \w_j - \w_i\w) =&\alpha^j (\w_{ij} \w_i- \w_j \w).
\end{align*}
Then equation (\ref{eqpaa}) reads
\begin{align*}
(\chi_i+\chi) =& \w_i \w (\omega_i-\omega).
\end{align*}
Applying the point transformation
\begin{align*}
\chi \to (-1)^{n_1+n_2} \chi, && y^i \to -(-1)^{n_1+n_2} y^i,
\end{align*}
we obtain  the system of equations
\begin{align*}
(\chi_i-\chi) =& \w_i \w (\omega_i-\omega),\\
\alpha^i (\w_{ij} \w_j + \w_i\w) =&\alpha^j (\w_{ij} \w_i+ \w_j \w),
\end{align*}
which constitutes the foundations of the alternative non-linear
integrable  discretization of analytic functions
\cite{BoSeSp}.

\section*{Acknowledgements}
\parbox{.135\textwidth}{\begin{tikzpicture}[scale=.03]
\fill[fill={rgb,255:red,0;green,51;blue,153}] (-27,-18) rectangle (27,18);
\pgfmathsetmacro\inr{tan(36)/cos(18)}
\foreach \i in {0,1,...,11} {
\begin{scope}[shift={(30*\i:12)}]
\fill[fill={rgb,255:red,255;green,204;blue,0}] (90:2)
\foreach \x in {0,1,...,4} { -- (90+72*\x:2) -- (126+72*\x:\inr) };
\end{scope}}
\end{tikzpicture}} \parbox{.85\textwidth}
{This research is part of the project No. 2022/45/P/ST1/03998  co-funded by the National Science Centre and the European Union Framework Programme
 for Research and Innovation Horizon 2020 under the Marie Sklodowska-Curie grant agreement No. 945339. For the purpose of Open Access, the author has applied a CC-BY public copyright licence to any Author Accepted Manuscript (AAM) version arising from this submission.}

\appendix

\section{Proof of Lemma \ref{1st_lemmam}}\label{app0}
We prove the three items of  lemma \ref{1st_lemmam}.
\begin{enumerate}
\item Since $K^{i,j}:=\m^{i}-\m^{j},$ we have
\begin{align*}
K^{i,j}_k+K^{j,k}_i+K^{k,i}_j=\m^{i}_k-\m^{j}_k+\m^{j}_i-\m^{k}_i+\m^{k}_j-\m^{i}_j=0,
\end{align*}
where we have substituted the expressions $\m^{p}_q,$ $p\neq q\in \{i,j,k\}$ from (\ref{CM_MAPS_12}).
\item Applying the relations (\ref{CM_MAPS_12}), we verify the second  point of the lemma.
\item We now prove the third  point.  There is
\begin{multline*}
  {}^i\Gamma^{j,k}=K^{i,j}_k K^{i,k}=\left(\m^{i}_k-\m^{j}_k\right)K^{i,k}=
  \left(K^{i,k}\m^{i}\left(K^{i,k}\right)^{-1}-K^{j,k}\m^{j}\left(K^{j,k}\right)^{-1}\right)K^{i,k}\\
  =K^{j,k}\left(\underbrace{\left(K^{j,k}\right)^{-1}K^{i,k}\m^{i}-\m^{j}\left(K^{j,k}\right)^{-1}K^{i,k}}_{{}^i\Delta^{j,k}}\right)  .
\end{multline*}
The expressions ${}^i\Delta^{j,k}$ are antisymmetric with respect to the interchange $j\leftrightarrow k.$ %that is $\Delta^{jk}+\Delta^{kj}=0.$
Indeed, under expansion and recollection of terms ${}^i\Delta^{j,k}$ reads
\begin{multline*}
{}^i\Delta^{j,k}=\left(\m^{j}-\m^{k}\right)^{-1}\left(\m^{i}\right)^2+\left(\left(\m^{k}\right)^{-1}-\left(\m^{j}\right)^{-1}\right)^{-1}\\
             -\left(\underbrace{\left(1-\m^{k}\left(\m^{j}\right)^{-1} \right)^{-1}-\left(1-\left(\m^{k}\right)^{-1}\m^{j} \right)^{-1}}_{E^{j,k}}\right)\m^{i}
\end{multline*}
The first two terms in the expression above are clearly antisymmetric under the interchange $j$ to $k.$ The same is true for the term  $E^{j,k}$ since we have
  \begin{multline*}
E^{j,k}=\left(1-\m^{k}\left(\m^{j}\right)^{-1} \right)^{-1}-\left(1-\left(\m^{k}\right)^{-1}\m^{j} \right)^{-1}\\
                = 1-\left(1-\m^{j}\left(\m^{k}\right)^{-1} \right)^{-1}-\left( 1- \left(1-\left(\m^{j}\right)^{-1}\m^{k} \right)^{-1}\right)=-E^{k,j},
\end{multline*}
where we have used twice the identity $\left(1-AB^{-1}\right)^{-1}+\left(1-BA^{-1}\right)^{-1}=1$. To recapitulate, since    both $K^{j,k}$ and  $ {}^i\Delta^{j,k},$ are  antisymmetric expressions under the interchange $j$ to $k,$ their product that coincides with ${}^i\Gamma^{j,k}$ $({}^i\Gamma^{j,k}=K^{j,k} {}^i\Delta^{j,k}),$ is symmetric.

Let us now prove relations (\ref{lemma_31}).  We have
\begin{multline*}
{}^i\Gamma^{j,k}\m^{i}\left({}^i\Gamma^{j,k}\right)^{-1} {}^i\Psi^{j,k}=\\
{}^i\Gamma^{j,k}\m^{i}\left(\underline{\left({}^i\Gamma^{j,k}\right)^{-1}}\left(1-\m^{j}_k\left(\m^{i}_k\right)^{-1}\right)\left(v^{k}-v^{i}\right)-
\underline{\left({}^i\Gamma^{j,k}\right)^{-1}}\left(1-\m^{k}_j\left(\m^{i}_j\right)^{-1}\right)\left(v^{j}-v^{i}\right)\right)\\
=\underline{{}^i\Gamma^{j,k}} \m^{i}\left(K^{i,k}\right)^{-1}\left(\m^{i}_k\right)^{-1}\left(v^{k}-v^{i}\right)-
\underline{{}^i\Gamma^{k,j}} \m^{i}\left(K^{i,j}\right)^{-1}\left(\m^{i}_j\right)^{-1}\left(v^{j}-v^{i}\right)\\
=\left(\m^{i}_k-\m^{j}_k\right)\left(v^{k}-v^{i}\right)-\left(\m^{i}_j-\m^{k}_j\right)\left(v^{j}-v^{i}\right)=-{}^i\Omega^{j,k}.
\end{multline*}
 where we have used  (\ref{def_g})  that results
\begin{align}
  {}^i\Gamma^{j,k}=\left(1-\m^{j}_k\left(\m^{i}_k\right)^{-1}\right)\m^{i}_k K^{i,k}= {}^i\Gamma^{k,j}=\left(1-\m^{k}_j\left(\m^{i}_j\right)^{-1}\right)\m^{i}_j K^{i,j},
\end{align}
to eliminate the underlined terms of the latter expressions.
\end{enumerate}

\section{Proof of Theorem \ref{theo1m}} \label{app00}
In order to prove that mappings $Q_{ij}$ are multidimensional compatible we have to prove that
$
v^{i}_{jk}=v^{i}_{kj},$ $m^{i}_{jk}=m^{i}_{kj},$ $\forall i\neq j\neq k\neq i\in \{1,\ldots,N\}$.

Let us first prove $m^{i}_{jk}=m^{i}_{kj},$ or equivalently $\m^{i}_{jk}=\m^{i}_{kj},$ where $\m^{i}:=\left(m^{i}\right)^{-1}$. Shifting  (\ref{CM_MAPS_11_n})
%, that is $\m^{i}_j=\left(\m^{i}-\m^{j}\right)\m^{i}\left(\m^{i}-\m^{j}\right)^{-1}$
        to the $k-$direction, we have
\begin{align}\label{proof1:1}
\m^{i}_{jk}=\left(\m^{i}_k-\m^{j}_k\right)\m^{i}_k\left(\m^{i}_k-\m^{j}_k\right)^{-1}.
\end{align}
There is
%\begin{align}\label{proof1:2}
\begin{multline*}
\m^{i}_k-\m^{j}_k=\left(\m^{i}-\m^{k}\right)\m^{i}\left(\m^{i}-\m^{k}\right)^{-1}-
\left(\m^{j}-\m^{k}\right)\m^{j}\left(\m^{j}-\m^{k}\right)^{-1}
={}^i\Gamma^{j,k}\left(K^{i,k}\right)^{-1},
\end{multline*}
%\end{align}
so (\ref{proof1:1}) reads
\begin{align*}
\m^{i}_{jk}={}^i\Gamma^{j,k}\left(K^{i,k}\right)^{-1}\m^{i}_k K^{i,k}\left({}^i\Gamma^{j,k}\right)^{-1}
\end{align*}

Substituting  $\m^{i}_{k}$ that reads $\m^{i}_{k}=K^{i,k}\m^{i}\left(K^{i,k}\right)^{-1}$  into the relations above we obtain the multidimensional compatibility formula
\begin{align}\label{ccf1}
\m^{i}_{jk}={}^i\Gamma^{j,k}\m^{i}\left({}^i\Gamma^{j,k}\right)^{-1},
\end{align}
that is clearly symmetric under the interchange $j$ to $k$ since  $\Gamma^{j,k}$  is symmetric (see Lemma \ref{1st_lemmam}) under the same interchange. Finaly, it can be shown easily that
\begin{align}\label{cc_f_m}
\m^{i}_{jk}=\m^{j}_{ik}+K^{i,j}_k.
\end{align}

Now we prove that
$
v^{i}_{jk}-v^{i}_{kj}=0,$  $\forall i\neq j\neq k\neq i\in \{1,\ldots,N\}$. Shifting (\ref{CM_MAPS_12_n}) in the $k-th$ direction we have
\begin{align*}
v^{i}_{jk}K^{i,j}_k=\m^{i}_{jk}v^{j}_k-\m^{j}_{ik}v^{i}_k+\left(v^{j}_k-v^{i}_k\right)\m^{i}_k.
\end{align*}
Using (\ref{def_g}),(\ref{cc_f_m}) and $\m^{i}_k=K^{i,k}\m^{i}\left(K^{i,k}\right)^{-1},$ the relations above read
\begin{align*}
v^{i}_{jk}{}^i\Gamma^{j,k}=\m^{i}_{jk}(v^{j}_k-v^{i}_k)K^{i,k}+{}^i\Gamma^{i,j}\left(K^{i,k}\right)^{-1}v^{i}_kK^{i,k}
+\left(v^{j}_k-v^{i}_k\right)K^{i,k}\m^{i}.
\end{align*}
Since ${}^i\Gamma^{j,k}$ is symmetric under $j\leftrightarrow k,$ we have
\begin{align}\label{tp1}
\left(v^{i}_{jk}-v^{i}_{kj}\right){}^i\Gamma^{j,k}=\m^{i}_{jk} {}^i\Omega^{j,k}+{}^i\Gamma^{j,k}
\left(\left(K^{i,k}\right)^{-1}v^{i}_kK^{i,k}-\left(K^{i,j}\right)^{-1}v^{i}_jK^{i,j}\right)+{}^i\Omega^{j,k}\m^{i},
\end{align}
where ${}^i\Omega^{j,k}:=\left(v^{j}_k-v^{i}_k\right)K^{i,k}-\left(v^{k}_j-v^{i}_j\right)K^{i,j}.$ Using (\ref{CM_MAPS_12_n}) shifted accordingly, we substitute $v^{j}_k,v^{i}_k,v^{k}_j$ and $v^{i}_j$ into ${}^i\Omega^{j,k}$ and the latter coincides with the expressions (\ref{ksi}) of Lemma \ref{1st_lemmam}.
%%%%%%%%%%%%
  In order  to prove the multidimensional compatibility we have to prove that the Rhs of (\ref{tp1}) identically vanishes. %Indeed, we take the first two %terms of  Rhs of (\ref{tp1}) and by after using (\ref{ccf1})  we expand and  recollect terms to obtain respectively
  Indeed, after using (\ref{ccf1}) the Rhs of (\ref{tp1}) reads
  \begin{align}\label{les1}
  \begin{aligned}
{}^i\Gamma^{j,k}\m^{i}\left({}^i\Gamma^{j,k}\right)^{-1}\left(\underbrace{{}^i\Omega^{j,k}+{}^i\Gamma^{j,k}\left(\m^{i}\right)^{-1}
\left(\left(K^{i,k}\right)^{-1}v^{i}_kK^{i,k}-\left(K^{i,j}\right)^{-1}v^{i}_jK^{i,j}\right)}_{{}^i\Lambda^{j,k}}\right)\\
+{}^i\Omega^{j,k}\m^{i}.
\end{aligned}
  \end{align}
  We use (\ref{CM_MAPS_12_n}) shifted accordingly and after expansion and recollection of terms, the expressions  ${}^i\Lambda^{j,k}$ reads
  \begin{multline*}
  {}^i\Lambda^{j,k}=\underbrace{\left(\m^{j}_k\left(\m^{i}_k\right)^{-1}\m^{k}_i-\m^{k}_j\left(\m^{i}_j\right)^{-1}\m^{j}_i\right)}_{=0\;\; \mbox{due to Lemma \ref{1st_lemmam} }}v^{i}\\
  +\left(\left(1-\m^{j}_k\left(\m^{i}_k\right)^{-1}\right)\left(v^{k}-
v^{i}\right)-\left(1-\m^{k}_j\left(\m^{i}_j\right)^{-1}\right)\left(v^{j}-v^{i}\right)\right)\m^{i}%={}^i\Psi^{j,k}\m^{i},
  \end{multline*}
  or (see (\ref{psi}))
  \begin{align}\label{lambda}
  {}^i\Lambda^{j,k}={}^i\Psi^{j,k}\m^{i}.
  \end{align}
  Using (\ref{lambda}),  the expressions (\ref{les1}), that constitute the Rhs of (\ref{tp1}) read
\begin{align*}
{}^i\Gamma^{j,k}\m^{i}\left({}^i\Gamma^{j,k}\right)^{-1}{}^i\Lambda^{j,k}+{}^i\Omega^{j,k}\m^{i}=
\underbrace{\left({}^i\Gamma^{j,k}\m^{i}\left({}^i\Gamma^{j,k}\right)^{-1}\Psi^{j,k}+\Omega^{j,k}\right)}_{=0\;\; \mbox{due to Lemma \ref{1st_lemmam} }}\m^{i}=0
\end{align*}
 and that completes the proof.

%\bibliographystyle{unsrt}
%\bibliography{ref_list3}

\begin{thebibliography}{10}

\bibitem{Kouloukas-2017}
T.~Kouloukas.
\newblock Relativistic collisions as yang-baxter maps.
\newblock {\em Phys. Lett. A}, 381(40):3445 -- 3449, 2017.

\bibitem{KOULOUKAS:2023}
T.E. Kouloukas.
\newblock Discrete integrable systems associated with relativistic collisions.
\newblock {\em Phys. D: Nonlinear Phenom.}, 456:133937, 2023.

\bibitem{BoMeSu}
A.I. Bobenko, C.~Mercat, and Y.B. .~Suris.
\newblock Linear and nonlinear theories of discrete analytic functions.
  integrable structure and isomonodromic green's function.
\newblock {\em J. Reine Angew. Math.}, 2005(583):117--161, 2005.

\bibitem{Duffin2}
R.J. Duffin.
\newblock Potential theory on a rhombic lattice.
\newblock {\em J. Comb. Theory}, 5(3):258--272, 1968.

\bibitem{Thurston}
W.P. Thurston.
\newblock The finite riemann mapping theorem.
\newblock Invited talk at the international symposium on the occasion of the
  proof of the Bieberbach conjecture, Purdue University, 1985.

\bibitem{Lax}
P.~D. Lax.
\newblock Integrals of nonlinear equations of evolution and solitary waves.
\newblock {\em Commun. pure appl. math.}, 21(5):467--490, 1968.

\bibitem{MIKHAILOV:2024}
A.~Mikhailov and T.~Skrypnyk.
\newblock Zhukovsky-{V}olterra top and quantisation ideals.
\newblock {\em Nucl. Phys. B}, 1006:116648, 2024.

\bibitem{Cherednik_1982}
I.V. Cherednik.
\newblock Quantum and classical chains for two-dimensional principal chiral
  fields.
\newblock {\em Funktsional. Anal. i Prilozhen.}, 16(1):89--90, 1982.

\bibitem{Kouloukas_2009}
T.E. Kouloukas and V.G. Papageorgiou.
\newblock Yang-{B}axter maps with first-degree-polynomial $2\times 2$ lax
  matrices.
\newblock {\em J. Phys. A: Math. Theor.}, 42(40):404012, 2009.

\bibitem{Kouloukas_2011}
T.E. Kouloukas and V.G. Papageorgiou.
\newblock {Poisson {Y}ang-{B}axter maps with binomial {L}ax matrices}.
\newblock {\em J. Math. Phys.}, 52(7):073502, 2011.

\bibitem{Bobenko:2008}
A.I. Bobenko and Y.B. Suris.
\newblock {\em Discrete differential geometry: Integrable structure},
  volume~98.
\newblock American Mathematical Soc., 2008.

\bibitem{Papageorgiou:2010}
V.G. Papageorgiou, Yu.B. Suris, A.G. Tongas, and A.P. Veselov.
\newblock On quadrirational {Y}ang-{B}axter maps.
\newblock {\em SIGMA}, 6:9pp, 2010.

\bibitem{ABS:YB}
V.E. Adler, A.I. Bobenko, and Yu.B. Suris.
\newblock Geometry of {Y}ang--{B}axter maps: pencils of conics and
  quadrirational mappings.
\newblock {\em Comm. Anal. Geom.}, 12(5):967--1007, 2004.

\bibitem{Sklyanin:1988}
E.K. Sklyanin.
\newblock Classical limits of {S}{U}$(2)$--invariant solutions of the
  {Y}ang--{B}axter equation.
\newblock {\em J. Soviet Math.}, 40:93--107, 1988.

\bibitem{Drinfeld:1992}
V.G. Drinfeld.
\newblock On some unsolved problems in quantum group theory, quantum groups.
\newblock {\em Lecture Notes in Math.}, 1510:1--8, 1992.

\bibitem{Bukhshtaber:1998}
V.M. Bukhshtaber.
\newblock Yang-{B}axter mappings.
\newblock {\em Uspekhi Mat. Nauk}, 53:241--242, 1998.

\bibitem{Veselov:20031}
A.P. Veselov.
\newblock Yang-{B}axter maps and integrable dynamics.
\newblock {\em Phys. Lett. A}, 314:214--221, 2003.

\bibitem{McGuire_1964}
J.B. McGuire.
\newblock Study of exactly soluble one-dimensional n-body problems.
\newblock {\em J. Math. Phys}, 5:622--636, 1964.

\bibitem{Yang:1967}
C.~N. Yang.
\newblock Some exact results for the many-body problem in one dimension with
  repulsive delta-function interaction.
\newblock {\em Phys. Rev. Lett.}, 19(23):1312--1315, 1967.

\bibitem{Baxter:1972}
R.J. Baxter.
\newblock Partition function of the eight-vertex lattice model.
\newblock {\em Ann. Physics}, 70:193--228, 1972.

\bibitem{Baxter:1982}
R.J. Baxter.
\newblock {\em Exactly solved models in statistical mechanics}.
\newblock Academic Press, London, 1982.

\bibitem{Artin_1925}
E.~Artin.
\newblock Theorie der z\"{o}pfe.
\newblock {\em Abh. Math. Semin. Univ. Hambg}, 4:47--72, 1925.

\bibitem{Artin_1947}
E.~Artin.
\newblock Theory of braids.
\newblock {\em Annals of Math.}, 48(1):101--126, 1947.

\bibitem{Franks-book}
N~Hietarinta, J.~Joshi and F.W. Nijhoff.
\newblock {\em Discrete Systems and Integrablity}.
\newblock Cambridge Texts in Applied Mathematics (No. 54). Cambridge University
  Press, 2016.

\bibitem{Etingof_1999}
P.~Etingof, T.~Schedler, and A.~Soloviev.
\newblock {Set-theoretical solutions to the quantum Yang-Baxter equation}.
\newblock {\em Duke Math. J.}, 100(2):169 -- 209, 1999.

\bibitem{Noumi:2020}
A.~Doliwa and M.~Noumi.
\newblock The {C}oxeter relations and {KP} map for non-commuting symbols.
\newblock {\em Lett. Math. Phys.}, 110:2743--2762, 2020.

\bibitem{Kassotakis:2:2021}
P.~Kassotakis and T.~Kouloukas.
\newblock On non-abelian quadrirational yang-baxter maps.
\newblock {\em J. Phys. A: Math. Theor}, 55(17):175203, 2022.

\bibitem{Kassotakis:2022b}
P.~Kassotakis.
\newblock Non-abelian hierarchies of compatible maps, associated integrable
  difference systems and {Y}ang-{B}axter maps.
\newblock {\em Nonlinearity}, 36(5):2514, 2023.

\bibitem{Rizos:2024}
S.~Konstantinou-Rizos and A.A. Nikitina.
\newblock Yang--{B}axter maps of {KdV}, {NLS} and {DNLS} type on division
  rings.
\newblock {\em Phys. D: Nonlinear Phenom.}, 465:134213, 2024.

\bibitem{Dimakis2019}
A.~Dimakis and F.~M\"uller-Hoissen.
\newblock Matrix kp: tropical limit and yang-baxter maps.
\newblock {\em Lett. Math. Phys.}, 109:799--827, 2019.

\bibitem{KaNie:2018}
P.~Kassotakis and M.~Nieszporski.
\newblock Difference systems in bond and face variables and non-potential
  versions of discrete integrable systems.
\newblock {\em J. Phys. A: Math. Theor.}, 51(38):385203, 2018.

\bibitem{Kassotakis_2011}
P.~Kassotakis and M.~Nieszporski.
\newblock Families of integrable equations.
\newblock {\em SIGMA}, 7(100):14pp, 2011.

\bibitem{Kassotakis_2012}
P.~Kassotakis and M.~Nieszporski.
\newblock On non-multiaffine consistent-around-the-cube lattice equations.
\newblock {\em Phys. Lett. A}, 376(45):3135--3140, 2012.
\newblock arXiv:1106.0435.

\bibitem{Kouloukas:2012}
T.E. Kouloukas and V.G. Papageorgiou.
\newblock 3{D} compatible ternary systems and {Yang-Baxter} maps.
\newblock {\em J. Phys. A: Math. Theor.}, 45(34):345204, 2012.

\bibitem{Doliwa_2013}
A.~Doliwa.
\newblock Non-commutative lattice-modified {G}el'fand-{D}ikii systems.
\newblock {\em J. Phys. A: Math. Theor.}, 46(20):205202, 2013.

\bibitem{Fordy_2017}
A.P. Fordy and P.~Xenitidis.
\newblock {$\mathbb{Z}^N$} graded discrete {L}ax pairs and integrable
  difference equations.
\newblock {\em J. Phys. A: Math. Theor.}, 50(16):165205, 2017.

\bibitem{Kassotakis_2021}
M.~Nieszporski and P.~Kassotakis.
\newblock Systems of difference equations on a vector valued function that
  admits 3d vector space of scalar potentials.
\newblock {\em arXiv:1908.01706[nlin]}, 2019.

\bibitem{Kass2}
P.~Kassotakis, M.~Nieszporski, V.~Papageorgiou, and A.~Tongas.
\newblock Integrable two-component systems of difference equations.
\newblock {\em Proc. R. Soc. A.}, 476:20190668, 2020.

\bibitem{Schief_2007}
W.K. Schief.
\newblock Discrete chebyshev nets and a universal permutability theorem.
\newblock {\em J. Phys. A: Math. Theor.}, 40(18):4775, 2007.

\bibitem{Hoffmann:2024}
T.~Hoffmann, A.O. Sageman-Furnas, and J.~Steinmeier.
\newblock Skew parallelogram nets and universal factorization.
\newblock {\em arXiv:2401.08467 [math.DG]}, 2024.

\bibitem{Doliwa_2014}
A.~Doliwa.
\newblock Non-commutative rational {Yang-Baxter} maps.
\newblock {\em Lett. Math. Phys.}, 104:299--309, 2014.

\bibitem{Kassotakis:1:2021}
P.~Kassotakis.
\newblock Discrete lax pairs and hierarchies of integrable difference systems.
\newblock {\em arXiv:nlin/2104.14529}, 2021.

\bibitem{Lobb:2009}
S.~Lobb and F.~Nijhoff.
\newblock Lagrangian multiforms and multidimensional consistency.
\newblock {\em J. Phys. A: Math. Theor.}, 42(45):454013, 2009.

\bibitem{Nijhoff:1990}
F.W. Nijhoff and H.W. Capel.
\newblock The direct linearization approach to hierarchies of integrable
  {P}{D}{E}s in $2+1$ dimensions: {I}. {L}attice equations and the
  differential-difference hierarchies.
\newblock {\em Inverse Problems}, 6:567--590, 1990.

\bibitem{ABS3d}
V.~E. Adler, A.~I. Bobenko, and Y.~B. Suris.
\newblock {Classification of Integrable Discrete Equations of Octahedron Type}.
\newblock {\em I{M}{R}{N}}, 2012(8):1822--1889, 2012.

\bibitem{Mercat77}
C.~Mercat.
\newblock Discrete complex structure on surfel surfaces.
\newblock In {\em Proceedings of the 14th IAPR International Conference on
  Discrete Geometry for Computer Imagery. DGCI'08}, pages 153--164, Berlin,
  Heidelberg, 2008. Springer-Verlag.

\bibitem{BiaKasNie}
M.~Bia{\l}ecki, P.~Kassotakis, and M.~Nieszporski.
\newblock Discrete analytic functions on an embedding of a graph in
  {$\mathbb{C}$}.
\newblock In preparation, 2025.

\bibitem{NimSch}
J.~J.~C. Nimmo and W.~K. Schief.
\newblock Superposition principles associated with the moutard transformation:
  an integrable discretization of a (2+1)-dimensional sine-gordon system.
\newblock {\em Proc. R. Soc. Lond. A.}, 453(1957):255--279, 1997.

\bibitem{DoGriNieSa}
A.~Doliwa, P.~Grinevich, M.~Nieszporski, and P.~M. Santini.
\newblock {Integrable lattices and their sublattices: From the discrete Moutard
  (discrete Cauchy-Riemann) 4-point equation to the self-adjoint 5-point
  scheme}.
\newblock {\em J. Math. Phys.}, 48(1):013513, 2007.

\bibitem{Ferrand_1944}
J.~Ferrand.
\newblock Fonctions pr\'eharmoniques et fonctions pr\'eholomorphes.
\newblock {\em Bull. Sci. Math, 2nd ser.}, 68:152--180, 1944.

\bibitem{Duffin1}
R.J. Duffin.
\newblock Basic properties of discrete analytic functions.
\newblock {\em Duke Math. J.}, 23(2):335--363, 1956.

\bibitem{DuffinDuris}
Richard~James Duffin and Charles~S. Duris.
\newblock A convolution product for discrete function theory.
\newblock {\em Duke Math. J.}, 31:199--220, 1964.

\bibitem{Hayabara}
S.~Hayabara.
\newblock Operational calculus on the discrete analytic functions.
\newblock {\em Math. Japon.}, 11:35--65, 1966.

\bibitem{Deeter}
C.R. Deeter and M.E. Lord.
\newblock Further theory of operational calculus on discrete analytic
  functions.
\newblock {\em J. Math. Anal. Appl.}, 26(1):92--113, 1969.

\bibitem{ZeilFur}
D.~Zeilberger and H.~Dym.
\newblock Further properties of discrete analytic-functions.
\newblock {\em J. Math. Anal. Appl.}, 58(2):405--418, 1977.

\bibitem{ZeilNew}
D.~Zeilberger.
\newblock New approach to theory of discrete analytic-functions.
\newblock {\em J. Math. Anal. Appl.}, 57(2):350--367, 1977.

\bibitem{AlpayJSV}
D.~Alpay, P.~Jorgensen, R.~Seager, and D.~Volok.
\newblock On discrete analytic functions: Products, rational functions and
  reproducing kernels.
\newblock {\em J. Appl. Math. Comput.}, 41(1-2):393--426, MAR 2013.

\bibitem{Mercat1}
C.~Mercat.
\newblock Exponentials form a basis of discrete holomorphic functions on a
  compact.
\newblock {\em Bull. Soc. Math. Fr.}, 132(2):305--326, 2004.

\bibitem{Mercat2}
C.~Mercat.
\newblock Discrete riemann surfaces and the {I}sing model.
\newblock {\em Commun. Math. Phys.}, 218:177--216, 2001.

\bibitem{BoSeSp}
A.~I. Bobenko, S.~Sechelmann, and B.~Springborn.
\newblock Discrete conformal maps: Boundary value problems, circle domains,
  fuchsian and schottky uniformization.
\newblock In Alexander~I. Bobenko, editor, {\em Advances in Discrete
  Differential Geometry}, pages 1--56. Springer Berlin Heidelberg, Berlin,
  Heidelberg, 2016.

\bibitem{Stephenson:2005}
K.~Stephenson.
\newblock {\em Introduction to the theory of circle packing: discrete analytic
  functions}.
\newblock Cambridge University Press, 2005.

\bibitem{Bobe_2006}
A.I. Bobenko, T.~Hoffmann, and B.A. Springborn.
\newblock Discrete minimal surfaces: geometry from combinatorics.
\newblock {\em Annals of Math.}, 164:231--264, 2006.

\bibitem{Hoffmann:2003}
A.I. Bobenko and T.~Hoffmann.
\newblock Hexagonal circle patterns and integrable systems. patterns with
  constant angles.
\newblock {\em Duke Math. J.}, 116:525--566, 2003.

\bibitem{Joshi_2021}
N.~Joshi, K.~Kajiwara, T.~Masuda, and N.~Nakazono.
\newblock Discrete power functions on a hexagonal lattice i: derivation of
  defining equations from the symmetry of the garnier system in two variables.
\newblock {\em J. Phys. A: Math. Theor.}, 54(33):335202, 2021.

\bibitem{NiRaGrOh}
F.~W. Nijhoff, A.~Ramani, B.~Grammaticos, and Y.~Ohta.
\newblock On discrete painlev\'e equations associated with the lattice kdv
  systems and the painlev\'e vi equation.
\newblock {\em Stud. Appl. Math.}, 106(3):261--314, 2001.

\end{thebibliography}

\end{document}